\documentclass[a4paper,11pt]{article}

\usepackage{jheppub} 
\usepackage[T1]{fontenc} 

\usepackage{verbatim}
\usepackage{enumerate}
\usepackage{mathrsfs}
\usepackage{bm}
\usepackage{tikz}

\newcommand{\tr}{\text{tr}}

\DeclareMathOperator*{\ext}{ext}

\newcommand{\Del}{\nabla}
\newcommand{\del}{\partial}
\newcommand{\dd}{\text{d}}

\renewcommand{\tilde}{\widetilde}
\renewcommand{\bar}{\overline}

\usetikzlibrary{arrows, positioning, shapes.geometric, decorations.markings, patterns}

\newtheorem{theorem}{Theorem}

\newtheorem{conjecture}[theorem]{Conjecture}
\newtheorem{corollary}[theorem]{Corollary}
\newtheorem{definition}[theorem]{Definition}
\newtheorem{lemma}[theorem]{Lemma}

\newenvironment{proof}[1][Proof]{\noindent\textbf{#1.} }{\ \rule{0.5em}{0.5em}}

\newcommand{\bra}[1]{\langle #1 |}
\newcommand{\ket}[1]{| #1 \rangle}
\newcommand{\braket}[2]{\langle #1 | #2 \rangle}

\title{Holographic entanglement entropy is cutoff-covariant}

\author{Jonathan Sorce}

\affiliation{Stanford Institute for Theoretical Physics, Stanford University, 382 Via Pueblo Mall, Stanford, CA 94305-4060, U.S.A.}

\emailAdd{jsorce@stanford.edu}

\abstract{In the context of the AdS/CFT correspondence, it is often convenient to regulate infinite quantities in asymptotically anti-de Sitter spacetimes by introducing a sharp cutoff at some finite, large value of a particular radial coordinate. This procedure is \emph{a priori} coordinate dependent, and may not be well-motivated in full, covariant general relativity; however, the fact that physically meaningful quantities such as the entanglement entropy can be obtained by such a regulation procedure suggests some underlying covariance. In this paper, we provide a careful treatment of the radial cutoff procedure for computing holographic entanglement entropy in asymptotically anti-de Sitter spacetimes. We prove two results that are frequently assumed in the literature, but that have not been carefully addressed: (i) that the choice of a ``globally minimal surface'' among several extremal candidates is independent of the choice of regulator, and (ii) that finite CFT quantities such as the mutual information which involve ``divergence-cancelling'' sums of entanglement entropies are well-defined under the usual prescription for computing covariant holographic entanglement entropy. Our results imply that the ``globally minimal surface'' prescription for computing the holographic entanglement entropy is well-posed from the perspective of general relativity, and thus support the widely-held belief that this is the correct prescription for identifying the entanglement wedge of a boundary subregion in AdS/CFT. We also comment on the geometric source of state-dependent divergences in the holographic entanglement entropy, and identify precisely the regime of validity of the ``vacuum subtraction'' protocol for regulating infinite entanglement entropies in arbitrary states by comparing them to the entanglement entropies of identical regions in the vacuum.
Our proofs make use of novel techniques for the covariant analysis of extremal surfaces, which are explained in detail and may find use more broadly in the study of holographic entanglement entropy.}

\makeatletter
\def\@fpheader{\vspace{0.2cm}}
\makeatother

\begin{document} 
\maketitle
\flushbottom

\section{Introduction}

In the AdS/CFT correspondence, it has been proposed \cite{RT, HRT, LM, DLR} that the entanglement entropies of subregions in particular conformal field theory (CFT) states correspond to the areas of extremal surfaces in particular asymptotically anti-de Sitter (AdS) spacetimes. More precisely, in a ``holographic'' CFT state with a semiclassical gravitational dual, the entanglement entropy $S(A)$ of a codimension-1, spacelike or null boundary region $A$ is computed to leading order in $G_N$ by
\begin{equation} \label{eq:HRT}
    S(A) = \ext_{\Sigma \sim A} \frac{\mathrm{Area}(\Sigma)}{4 G_N},
\end{equation}
where the extremum is taken over all codimension-2 bulk surfaces $\Sigma$ that are homologous to the boundary region $A$.\footnote{When we say $A$ and $\Sigma$ are ``homologous,'' we mean that there exists a codimension-1 hypersurface in the bulk that has $A \cup \Sigma$ as its boundary.} In particular, the homology constraint requires that the boundary of $\Sigma$ coincide with the entangling surface $\del A$. A surface $\Sigma$ that extremizes \eqref{eq:HRT} is called a Ryu-Takayanagi (RT) surface in a static spacetime, or more generally a Hubeny-Rangamani-Takayanagi (HRT) surface in a dynamical spacetime.

Calculations involving equation \eqref{eq:HRT} are troubled by the fact that both $S(A)$ and $\mathrm{Area}(\Sigma)$ are formally infinite when $\del A$ is nonempty. The entanglement entropy is infinite due to ultraviolet divergences in short-range correlations across the entangling surface, while the area of $\Sigma$ is infinite due to the fact that areas diverge near the boundary of an asymptotically anti-de Sitter spacetime. In some sense, this means that \eqref{eq:HRT} is trivially satisfied, as it gives the correct answer for the entanglement entropy: $\infty = \infty$. This result, however, is not especially satisfying; in order for \eqref{eq:HRT} to be meaningful, it must be useful for computing manifestly finite quantities in the CFT. For example, the \emph{mutual information} between regions $A$ and $B$,
\begin{equation} \label{eq:MI}
    I(A:B) = S(A) + S(B) - S(A\cup B),
\end{equation}
is generally finite when $A$ and $B$ are non-adjacent --- divergences in the entanglement entropies come from correlations across the entangling surfaces $\del A$, $\del B$, and $\del (A \cup B)$, and these divergences cancel in \eqref{eq:MI} when $\del A$ and $\del B$ are disjoint. Many other information-theoretic quantities of interest, such as the \emph{conditional mutual information} 
\begin{equation} \label{eq:CMI}
    I(A:B|C) = S(AC) + S(BC) - S(ABC) - S(C),
\end{equation}
involve such ``divergence-cancelling'' sums of entanglement entropies and are thus believed to be finite.

In order to compute these finite CFT quantities using the holographic dictionary, it is necessary to introduce a cutoff in equation \eqref{eq:HRT}. One can regulate the areas of surfaces by introducing a sharp cutoff in an arbitrary radial coordinate, perform the calculation using these regulated areas, and hope that the correct, finite CFT quantity is obtained in the limit as the cutoff is removed. \emph{A priori}, however, it is not obvious that different choices of radial coordinate for cutting off spacetime will give the same finite answers in the limit as the cutoff is removed, or even that an arbitrarily-chosen radial cutoff will give a finite answer at all. Since finite sums of entanglement entropies are well-defined in the CFT, the holographic entanglement entropy proposal \eqref{eq:HRT} can only hold if the finite answers one obtains by adding and subtracting the areas of radially regulated surfaces are independent of the chosen radial cutoff in the limit as the cutoff is removed.

Furthermore, there is an ambiguity in equation \eqref{eq:HRT} when there exist multiple extremal surfaces anchored to the same boundary region. The usual prescription for choosing the ``correct'' HRT surface, proposed in \cite{HRT}, is to choose the ``globally minimal surface'' among various extremal candidates by introducing a radial cutoff and asking which candidate surface has the smallest (finite) area. This prescription, again, is not \emph{a priori} coordinate-invariant; one could imagine that different radial coordinates could pick out different extremal surfaces as being ``globally minimal.'' Such an ambiguity would pose serious problems for the holographic dictionary: first because CFT quantities such as the mutual information depend on the choice of a ``correct'' extremal surface for computing the entanglement entropy, and second because the choice of extremal surface corresponding to a boundary subregion determines the bulk entanglement wedge of that subregion. If entanglement wedge reconstruction is to be believed \cite{DHW, noisyDHW}, then the entanglement wedge of a boundary region is uniquely determined by bulk reconstruction; it cannot be subject to the whims of our chosen coordinates.

Both of these issues can be resolved by one simple observation: that \emph{the finite difference in area between any two extremal surfaces homologous to the same boundary region is cutoff-independent}. The main contribution of this paper is to prove this statement. We claim first that for any reasonable choice of radial coordinate $r$, extremal surfaces with the same boundary anchor have identical area divergences in $r$; this implies that the area difference between any two extremal surfaces with the same boundary anchor is finite under any radial cutoff prescription. We then show that this finite difference is independent of the radial coordinate $r$ chosen to regulate the calculation. This result implies both (i) that the ``globally minimal surface'' among several extremal candidates is well-defined as the unique surface whose area difference with every other extremal candidate is negative, and (ii) that ``divergence-cancelling'' sums of extremal surface areas such as those appearing in equations \eqref{eq:MI} and \eqref{eq:CMI} are cutoff-independent. Our results hold not only for asymptotically AdS spacetimes, but also for \emph{asymptotically locally AdS} spacetimes that are proposed to have their own holographic correspondence (see, e.g., \cite{HM1998}). The machinery developed to prove this theorem will also allow us to provide a rigorous definition of the ``vacuum subtraction'' protocol for defining finite entanglement entropies; we will explain how to use a single regulator shared between two different spacetimes to subtract the vacuum entanglement entropy of a subregion from the entanglement entropy in a generic state, and identify the metric falloff conditions under which this subtraction procedure yields a consistent, finite answer.\footnote{While it is frequently assumed that all spacetimes of interest have universal, vacuum-like divergences in the entanglement entropy, there exist holographic spacetimes with state-dependent divergences; in these spacetimes, the vacuum subtraction protocol is ill-defined. We will comment further on these spacetimes, and on their relationship to the state-dependent divergences previously identified by Marolf and Wall \cite{MW}, in Section \ref{sec:two-spacetimes}.}

Finally, there has been some confusion in the literature as to how a boundary-anchored extremal surface should be transported to the location of the cutoff. One prescription is to impose the cutoff directly on the boundary-anchored extremal surface, while another dictates that one should first transport the boundary region $R$ to the cutoff surface along the radial direction, then compute the area of a bulk extremal surface homologous to this new, ``transported'' subregion of the cutoff surface. These two different prescriptions are sketched in Figure \ref{fig:two-cutoff-prescriptions-1} for the extremal geodesic of an interval in vacuum $AdS_3.$ A straightforward corollary of our results answers the question of which prescription is preferred: subject to a modest conjecture about the existence of extremal surfaces with ``transported'' boundary conditions, the choice does not matter --- both prescriptions give the same answer for the finite area difference between two extremal surfaces with the same boundary anchor. This corollary appears in Section \ref{sec:the-big-one}.

\begin{figure}
    \centering
	\begin{tikzpicture}[scale=1.5, thick]
	\draw (0, 0) circle (2.5cm);

	\draw [line width=2pt, lightgray!30!blue] (-1.7, 1.83) arc (42.89:-42.89:2.69);
    
    \draw [line width=2pt, black!15!lime!10!orange] (-1.16, 1.25) to [out=-60, in=60] (-1.16,-1.25);
    
    \draw [ultra thick, dashed] (-1.68, 1.85) to (-1.16, 1.25);
    \draw [ultra thick, dashed] (-1.68, -1.85) to (-1.16, -1.25);
    
	\draw [purple, ultra thick, dashed] (0,0) circle (1.7cm);
	\end{tikzpicture}
	
	\caption{Two different prescriptions for cutting off the area of an extremal surface, shown here in a constant-time slice of vacuum $AdS_3$ in global coordinates. In the first prescription, represented in blue, the radial cutoff is imposed directly on the boundary-anchored surface. In the second prescription, represented in orange, the boundary region is first transported to the cutoff along radial curves (dashed black lines), then a new extremal surface is found using the initial data of this ``transported'' boundary. In Section \ref{sec:the-big-one}, we show that the two prescriptions give identical answers for the finite difference in area between two extremal surfaces with the same boundary anchor, subject to a conjecture about the existence of cutoff-anchored extremal surfaces.}
	\label{fig:two-cutoff-prescriptions-1}
\end{figure}
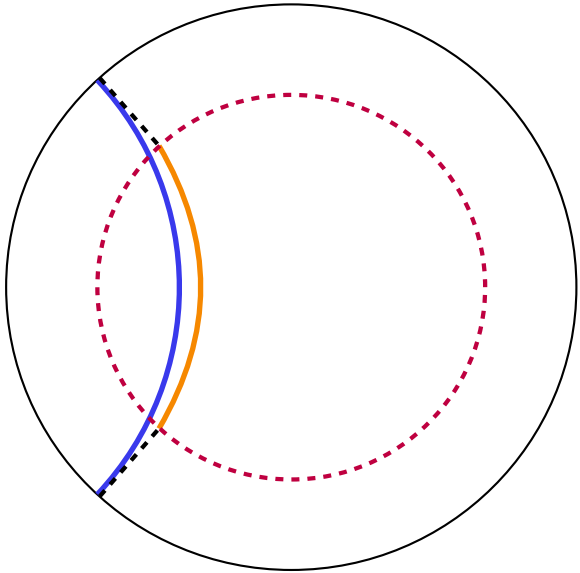

The plan of the paper is as follows. In Section \ref{sec:conformal-infinity}, we review the notion of conformal infinity and introduce a mathematical framework for studying asymptotically AdS spacetimes. In Section \ref{sec:main-sec}, we analyze the asymptotic structure of extremal surfaces in asymptotically AdS spacetimes and prove the main result of this paper: that the finite difference in area between two extremal surfaces homologous to the same boundary region is cutoff-independent. We also comment on state-dependent divergences in the entanglement entropy, and identify the class of spacetimes in which one can regulate the area of an extremal surface by subtracting off the area of a corresponding surface in the vacuum. In Section \ref{sec:discussion}, we review our results and comment on potential future applications of our techniques. Appendix \ref{app:FG} contains a careful treatment of the Fefferman-Graham prescription for asymptotic expansions of AdS metrics, including a generalization of the expansion to arbitrary matter falloff conditions. Appendix \ref{app:extremal-surfaces} contains a pedagogical introduction to the covariant analysis of extremal surfaces, including four equivalent (covariant) ways of characterizing extremality. To our knowledge, two of these four characterizations have not appeared in the literature before. The techniques developed in Appendix \ref{app:extremal-surfaces} are quite general, and may be useful more broadly in the study of holographic entanglement entropy.

We mostly use the notational conventions of \cite{WaldBook}, using a ``mostly pluses'' metric signature on spacetime and using the early Latin alphabet $a, b, \dots$ to denote ``abstract'' tensor indices. Late-alphabet Greek indices $\mu, \nu, \dots$ are reserved for expressions that only hold in a particular system of coordinates. Middle-alphabet Latin indices $j, k, \dots$ are generally used to refer to coordinate indices on submanifolds (i.e., hypersurfaces), but we make this distinction explicit whenever it appears. We work in units with $\hbar = c = 1,$ leaving Newton's constant $G_N$ explicit. All dimension-dependent expressions use the variable $d$ to denote \emph{spacetime} dimension, not the dimension of the dual CFT. We will not address the question of existence of extremal surfaces homologous to particular boundary subregions; discussions of this issue in the mathematics literature can be found in \cite{existence1, existence2, existence3}, and in the physics literature in \cite{maximin, maximin2}. We will also assume throughout that the extremal surfaces under consideration are smooth, at least in a neighborhood of the AdS boundary, and that their normal bundles are locally integrable (though this condition can be relaxed; see Appendix \ref{app:extremal-surfaces}).

\section{Conformal Infinity and Asymptotic Cutoffs}
\label{sec:conformal-infinity}

To understand radial cutoffs in AdS/CFT, we must first understand the structure of asymptotically anti-de Sitter spacetimes ``at infinity.'' The general machinery for quantifying the notion of such a spacetime boundary is that of \emph{conformal infinity}, which encodes the universal asymptotic behavior of a family of spacetime metrics that all look alike outside some bounded region. In defining conformal infinity, we follow the exposition of \cite{FG1}.

Consider first a manifold-with-boundary $\bar{M}$ with interior $M$ and boundary $\del M$. For a spacetime metric $g_{ab}$ on $M$, we will define a notion of ``infinity'' for the spacetime $(M, g_{ab})$ by using this interior metric to define an ``extended'' metric on the boundary. If the metric $g_{ab}$ already admits a smooth extension to the boundary $\del M$, as is the case when $M$ is compact, then this extension is trivial; however, it is often the case that $g_{ab}$ blows up near $\del M$ and thus admits no such smooth extension. For concreteness, one may think of vacuum $AdS_3,$ which has metric
\begin{equation}
    ds^2 = - (1 + r^2) dt^2 + (1 + r^2)^{-1} dr^2 + r^2 d \theta^2
\end{equation}
in global coordinates with the AdS radius set to $\ell_{AdS} = 1.$ This metric blows up as $r^2$ in the limit $r \rightarrow \infty.$

To get around the asymptotic blowup of the metric, we introduce the notion of a \emph{defining function} for the boundary. A defining function $z$ is a smooth map $z : \bar{M} \rightarrow \mathbb{R}$ such that
\begin{enumerate}[(i)]
    \item $z$ is positive on $M$ and zero on $\del M$,
    \item $z$ has a simple zero on $\del M$, i.e., $dz|_{\del M} \neq 0,$ and
    \item $z^k g_{ab}$ admits a smooth (nonzero) extension to the boundary for some $k > 0.$
\end{enumerate}
The defining function $z$ essentially cancels a pole of the form $1 / z^k$ in the spacetime metric; it is called a ``defining function'' because it ``defines'' the metric on the boundary. Furthermore, condition (ii) guarantees that the level sets of $z$ near $z=0$ smoothly foliate a neighborhood of the boundary. As such, level sets of $z$ can be used to regulate quantities that are formally infinite in the limit as one approaches the boundary.

Of course, there are infinitely many different defining functions that can be used to extend $g_{ab}$ to the boundary; in fact, for any defining function $z$ and smooth function $\omega$ on $\bar{M}$, $z' = e^{\omega} z$ is a defining function with the same $k$. Conversely, any two defining functions $z$ and $z'$ are related by a positive prefactor $e^{\omega} = z' / z$, whose smoothness on the boundary follows from condition (ii) above.\footnote{In particular, this duality implies that the order of the pole, $k$, is an intrinsic property of the metric and does not depend on the choice of the defining function.} It follows that for any two defining functions $z$ and $z'$, the corresponding smooth metrics $\tilde{g}_{ab} = z^k g_{ab}$ and $\tilde{g}'_{ab} = (z')^k g_{ab}$ are related by a conformal transformation
\begin{equation}
    \tilde{g}'_{ab} = e^{k \omega}\, \tilde{g}_{ab}.
\end{equation}
\emph{A priori}, no one choice of defining function is better than any other for extending the spacetime metric $g_{ab}$ ``to infinity.'' As such, the natural structure to associate with the boundary of $(M, g_{ab})$ is the \emph{conformal manifold} $(\del M, [\tilde{g}_{ab}]_{T \del M})$, defined as the boundary manifold together with the conformal equivalence class of metrics that can be obtained by choosing a defining function to extend $g_{ab}$ smoothly to the boundary. The conformal manifold $(\del M, [\tilde{g}_{ab}]_{T \del M})$ is called \emph{conformal infinity} for the spacetime $(M, g_{ab})$.

The machinery of conformal infinity gives a clean way of studying the asymptotic behavior of a spacetime. In particular, two spacetimes are said to be \emph{asymptotically equivalent} if they admit embeddings into manifolds-with-boundary with isomorphic conformal infinities. A $d$-dimensional spacetime is said to be \emph{asymptotically anti-de Sitter} if it has the same conformal infinity as vacuum anti-de Sitter spacetime in $d$ dimensions, i.e., the conformal infinity given by a manifold with topology $S_{d-2} \times \mathbb{R}$ and a conformal metric with representative
\begin{equation}
    ds^2
        = - dt^2 + d \Omega_{d-2}^2.
\end{equation}
Note that this characterizes the conformal infinity of an asymptotically \emph{global} AdS spacetime. In studying the AdS/CFT correspondence, one is also frequently interested in studying spacetimes whose conformal infinities match that of a patch of global AdS --- e.g., spacetimes whose asymptotics match that of the Poincar\'{e} patch.

Anti-de Sitter spacetime in $d$ dimensions is characterized as the universal cover of maximally symmetric spacetimes satisfying the equation\footnote{Here, as before, we implicitly use units where the AdS radius is set to $\ell_{AdS} = 1.$}
\begin{equation} \label{eq:AdS}
    R_{ab} = - (d - 1) g_{ab}.
\end{equation}
Any asymptotically anti-de Sitter spacetime must necessarily satisfy equation \eqref{eq:AdS} to leading order in $z$, i.e., to order $1/z^k$. By writing equation \eqref{eq:AdS} in coordinates adapted to the level sets of $z$ and matching the $zz$ components at leading order, it is straightforward to show that the metric pole of any asymptotically anti-de Sitter spacetime must satisfy $k=2.$ For any asymptotically AdS metric $g_{ab}$ and choice of defining function $z$, it follows that $g_{ab}$ can be written in terms of an asymptotically smooth metric as
\begin{equation}
    g_{ab} = \frac{1}{z^2} \tilde{g}_{ab}.
\end{equation}
Matching the two sides of equation \eqref{eq:AdS} at order $1/z^2$ also requires that the $zz$ component of the smooth metric is universal on the boundary, i.e., that
\begin{equation}
    \tilde{g}_{zz}|_{\del M} = 1
\end{equation}
is satisfied for any defining function $z$. By picking coordinates $x^j$ adapted to the level sets of $z$ in a neighborhood of the boundary, it follows that the metric of any asymptotically AdS spacetime may be expanded in $z$ as
\begin{equation} \label{eq:early-FG}
    ds^2 = \frac{1}{z^2} \left[ (1 + O(z))\, dz^2 + \tilde{h}_{jk} d x^j dx^k \right],
\end{equation}
where $\tilde{h}_{jk}$ is a family of induced metrics on the level sets of $z$. This expression, which constitutes a universal form for asymptotically AdS metrics, holds for any spacetime that satisfies equation \eqref{eq:AdS} at leading order in $z$, even spacetimes that do not have the same global conformal structure at infinity as vacuum $AdS_d$. Such spacetimes are called asymptotically \emph{locally} anti-de Sitter, and are conjectured to have their own holographic correspondence \cite{HM1998}.

We commented earlier in this section that level sets of $z$ near the boundary could be used to regulate quantities that are formally infinite in the limit as one approaches the boundary. In particular, we will be interested in using these level sets to regulate the areas of boundary-anchored extremal surfaces. For any choice of defining function and any small positive number $z_c$, the area of a boundary-anchored surface in the region $z > z_c$ is finite. Furthermore, the fact that the level sets of $z$ smoothly foliate a neighborhood of the boundary means that this prescription for regulating areas comes with a way of taking the limit as $z$ goes to zero.\footnote{This is exactly what is usually done in the literature when the area of a surface is regulated using some ``asymptotic radial coordinate'' --- defining functions $z$ are in one-to-one correspondence with asymptotic radial coordinates $r \sim 1/z$, and all of the usual coordinates used to regulate the areas of extremal surfaces (e.g., the global anti-de Sitter radial coordinate) have level sets that agree with the level sets of some defining function $z$.} Henceforth, we will refer to level sets of $z$ for a particular choice of defining function as ``cutoff surfaces.'' We note that a choice of defining function $z$ picks out a preferred metric on the boundary and hence breaks conformal invariance at infinity; this conformal symmetry breaking is exactly what happens in a conformal field theory when one introduces an ultraviolet cutoff.

Before proceeding to prove cutoff-covariance of the holographic entanglement entropy, we pause momentarily to note that there is something of an asymmetry between defining functions and conformal representatives of the boundary metric: a choice of defining function picks out a conformal representative on the boundary, but a choice of conformal representative on the boundary does not pick out a preferred defining function for obtaining that boundary metric. One can in principle remedy this by restricting to the class of \emph{special defining functions} \cite{GL1991, GW1999} --- those that satisfy $\tilde{g}_{zz} = 1$ not just on the boundary, but in a neighborhood of the boundary as well. It was shown in \cite{GL1991} that for any representative $\tilde{h}_{ab}$ in the conformal class of boundary metrics, there exists a \emph{unique} defining function $z$ in a neighborhood of the boundary such that $z^2 g_{ab}$ limits to $\tilde{h}_{ab}$ on the boundary and that
\begin{equation}
    \frac{1}{z^2} g^{ab} (dz)_a (dz)_b = 1
\end{equation}
is satisfied in the neighborhood where $z$ is defined. Within the class of special defining functions, then, there is a unique, preferred defining function for each conformal representative of the boundary.

Computationally, these special defining functions are quite convenient. In a system of coordinates adapted to the level sets of a special defining function, the metric takes the form
\begin{equation} \label{eq:mid-FG}
    ds^2 = \frac{1}{z^2} \left[ dz^2 + \tilde{h}_{jk} d x^j dx^k \right],
\end{equation}
which has an advantage over our earlier expression, \eqref{eq:early-FG}, in that there is no need to worry about higher-order terms in $g_{zz}$. Still, it benefits us to be precise about exactly when one should restrict their attention to the class of special defining functions. In particular, our proof of the cutoff-covariance of holographic entanglement entropy in Section \ref{sec:main-sec} holds for any defining function, not just the special defining functions. This is useful to remember because it means that in practice, one does not need to be particularly careful about which radial coordinate is used to regulate the holographic entanglement entropy --- so long as one uses a coordinate that is suitably smooth at infinity, it will correspond to a defining function and thus to a reasonable cutoff prescription. However, the special class of defining functions is necessary when comparing holographic entropies between two \emph{different} spacetimes. Extremal surfaces within a single spacetime can be regulated by a single choice of defining function; when comparing extremal surfaces in two different spacetimes, however, one must make sure to choose ``matching'' defining functions so that the regulators agree. This can be accomplished by choosing a conformal representative of the boundary metric, then using the special defining function for that representative in each of the two spacetimes. One must then of course check that the answers one obtains are independent of the choice of conformal representative. We identify the class of spacetimes for which this procedure is well-defined in our discussion of vacuum subtraction in Section \ref{sec:two-spacetimes}.

The special defining functions are used somewhat famously in the Fefferman-Graham prescription for asymptotic expansions of the metric. While the definition of an asymptotically anti-de Sitter spacetime requires only that the AdS equation \eqref{eq:AdS} is satisfied at order $1/z^2$, a stricter constraint is frequently imposed in the literature. This is the \emph{Fefferman-Graham falloff condition}, which requires that the spacetime metric satisfies\footnote{Here we employ ``little-o'' notation. Terms of order $o(z^{d-4})$ vanish \emph{strictly faster} than $z^{d-4}$ in the limit $z \rightarrow 0$.}\textsuperscript{,}\footnote{The rate of falloff in equation \eqref{eq:AdS-FG} may seem to depend on the choice of defining function $z$. However, condition (ii) above, which states that $dz$ is nonvanishing on the boundary, implies that equation \eqref{eq:AdS-FG} is $z$-independent.}
\begin{equation} \label{eq:AdS-FG}
    R_{ab} = - (d-1) g_{ab} + o(z^{d-4}).
\end{equation}
Spacetimes with this falloff have the property that for any special defining function $z$, with metric given in the form \eqref{eq:mid-FG}, the induced metric $\tilde{h}_{jk}$ on the cutoff surfaces is uniquely specified as an even power series in $z$ up to and including order $z^{d-2}$ \cite{FG1,FG2}. It is also often considered a ``physically reasonable'' falloff condition, since for spacetimes satisfying Einstein's equations, equation \eqref{eq:AdS-FG} is satisfied if and only if the bulk stress-energy tensor falls off as
\begin{equation}
    T_{ab} = o(z^{d-4})
\end{equation}
near the boundary, which is exactly the falloff one needs in order for a family of timelike observers to measure vanishing integrated energy density in a neighborhood of the boundary.\footnote{Typical matter configurations will actually fall off much faster than $o(z^{d-4})$. By solving the scalar wave equation in an asymptotically AdS background using the coordinates given in equation \eqref{eq:mid-FG}, for example, one can check that the stress-energy of normalizable Klein-Gordon fields in asymptotically anti-de Sitter spacetimes falls off near the boundary as $z^{2d-4}.$  Further comments on this observation, and on the vanishing integrated energy density measured by a family of timelike observers near the boundary, can be found in Appendix \ref{app:FG-falloff}.} However, one can easily write down asymptotically AdS metrics that do not obey this condition, and some are even believed to be dual to holographic states \cite{MW}. As we will discuss further in Section \ref{sec:two-spacetimes}, such spacetimes can have state-dependent divergences in the areas of extremal surfaces that must be treated carefully. To generalize the Fefferman-Graham expansion to arbitrary falloff conditions, we provide the following theorem, which is proved in Appendix \ref{app:FG}.

\begin{theorem}[Generalized Fefferman-Graham Expansion]
    \label{thm:FG}
    Let $z$ be a special defining function for a $d$-dimensional, asymptotically anti-de Sitter spacetime, with coordinates $x^j$ chosen along the level sets of $z$ so that the metric takes the form
    \begin{equation}
        ds^2 = \frac{1}{z^2} \left[ dz^2 + \tilde{h}_{jk} d x^j dx^k \right].
    \end{equation}
    Suppose, further, that the spacetime metric $g_{ab}$ satisfies the metric falloff condition
    \begin{equation}
        R_{ab} = - (d-1) g_{ab} + o(z^k)
    \end{equation}
    for some integer $k \geq -2.$ Then the smooth induced metric $\tilde{h}_{jk}$ is uniquely determined as a power series in $z$ up to and including order $z^{k+2}$ or $z^{d-2}$, whichever comes first. This power series is even in $z$.
\end{theorem}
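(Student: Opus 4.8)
The plan is to compute the Ricci tensor of $g_{ab} = z^{-2}\tilde g_{ab}$ directly in the adapted coordinates, impose the falloff condition order by order in $z$, and read off a recursion for the coefficients of $\tilde h_{jk}$. Since $g_{ab}$ and $\tilde g_{ab} = dz^2 + \tilde h_{jk}\,dx^j dx^k$ differ by the conformal factor $z^{-2}$, I would first apply the standard conformal transformation law for the Ricci tensor with Weyl factor $\Omega = 1/z$. Writing $' = \del_z$ and $\theta = \tilde h^{jk}\tilde h'_{jk}$, the tangential components then organize into
\begin{equation}
    R_{jk} + (d-1)g_{jk} = \tilde R_{jk}[\tilde h] + \frac{d-2}{2z}\,\tilde h'_{jk} + \frac{\theta}{2z}\,\tilde h_{jk},
\end{equation}
where the leading $1/z^2$ pieces have cancelled --- this cancellation is precisely the statement that the metric solves \eqref{eq:AdS} at leading order --- and where $\tilde R_{jk}[\tilde h]$ is the Ricci tensor of the foliated metric $\tilde g$, whose only second-$z$-derivative term is $-\tfrac12 \tilde h''_{jk}$. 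Multiplying the falloff condition $R_{jk}+(d-1)g_{jk} = o(z^k)$ through by $2z$ turns the tangential equation into a recursion with source $o(z^{k+1})$.

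The key step is the indicial analysis. Substituting the ansatz $\tilde h_{jk} = \sum_{n\ge 0} z^n h^{(n)}_{jk}$ and collecting the coefficient of $z^{n-1}$, every term built from $h^{(m)}$ with $m<n$ is pushed into a source $S^{(n-1)}_{jk}$, while the terms linear in $h^{(n)}_{jk}$ assemble into the operator
\begin{equation}
    L_n\big[h^{(n)}\big]_{jk} = n(d-1-n)\,h^{(n)}_{jk} + n\,\tr_0\!\big(h^{(n)}\big)\,h^{(0)}_{jk},
\end{equation}
where $\tr_0$ is the trace taken with respect to $h^{(0)}$. I would stress that $L_n$ is \emph{not} a scalar multiple of the identity: splitting $h^{(n)}$ into its $h^{(0)}$-traceless part and its trace part, $L_n$ acts with eigenvalue $n(d-1-n)$ on the former and $n(2d-2-n)$ on the latter. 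Both are nonzero precisely when $1 \le n \le d-2$, so in that range $L_n$ is invertible and $h^{(n)}$ is uniquely fixed by $S^{(n-1)}$; the first degeneracy occurs at $n=d-1$ (the traceless eigenvalue vanishes), which is the geometric origin of both the $z^{d-2}$ cutoff and the undetermined ``stress-tensor'' coefficient. Independently, the source $o(z^{k+1})$ leaves the coefficient of $z^{n-1}$ unconstrained only once $n-1 > k+1$, i.e.\ it fixes $h^{(n)}$ for $n \le k+2$. Combining both effects, $\tilde h_{jk}$ is determined up to order $\min(k+2,\,d-2)$, which is the claimed bound.

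To establish evenness I would argue by parity and induction. Checking term by term, the tangential equation is \emph{odd} under $z \to -z$ whenever $\tilde h$ is even. Consequently, if $h^{(m)} = 0$ for every odd $m < n$ (with $n$ itself odd), then the source $S^{(n-1)}$ sits at the even order $z^{n-1}$ of a $z$-odd expression and therefore vanishes; since $L_n$ is invertible for $n \le d-2$, this forces $h^{(n)} = 0$, closing the induction and showing the series is even throughout the determined range. The base cases follow from the same vanishing-source observation.

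The main obstacle is bookkeeping rather than any single conceptual hurdle, which is why the computation deserves a careful appendix treatment. One must track the powers of $z$ precisely --- the single factor of $z$ relating $o(z^k)$ on $R_{ab}+(d-1)g_{ab}$ to the $o(z^{k+1})$ source is exactly what distinguishes the bound $z^{k+2}$ from a naive $z^{k+1}$ --- verify that $L_n$ genuinely carries the trace term and hence has the two-branch spectrum above, and fix the precise meaning of the tensorial estimate $R_{ab}=-(d-1)g_{ab}+o(z^k)$ at the component level (the $z$-independence noted in the footnote). One should also confirm that the remaining $zz$ and $zj$ components of \eqref{eq:AdS} are consistent with, and do not further constrain, the $h^{(n)}$ in the determined range, so that the tangential recursion alone suffices; these reduce to the Hamiltonian and momentum constraints and bite only at the first undetermined order. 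A useful consistency check throughout is that setting $k = d-4$ recovers the classical Fefferman-Graham statement that $\tilde h_{jk}$ is fixed through order $z^{d-2}$.
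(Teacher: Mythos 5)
Your proposal is correct, and its skeleton is the same as the paper's: restrict attention to the tangential ($jk$) components of the falloff condition in adapted coordinates, expand $\tilde h_{jk}$ order by order, invert a linear operator on symmetric tensors at each order, and track the two independent failure modes --- the indicial degeneracy at $n=d-1$ and the loss of information in the source beyond order $k+2$ --- with evenness established inductively. The differences lie in how the key steps are implemented, and they are worth recording. You derive the recursion from the conformal transformation law with $\Omega = 1/z$ together with the Gauss--Codazzi structure of the foliated metric $dz^2 + \tilde h_{jk}\,dx^j dx^k$, whereas the paper computes the coordinate Christoffel symbols of $g_{ab}$ directly; your route makes the cancellation of the $1/z^2$ terms transparent, though your notation $\tilde R_{jk}[\tilde h]$ for what you correctly describe as the tangential Ricci tensor of the $d$-dimensional smooth metric $\tilde g$ should be cleaned up. More substantively, you prove invertibility of $L_n$ by splitting into $h^{(0)}$-traceless and pure-trace parts, with eigenvalues $n(d-1-n)$ and $n(2d-2-n)$; the paper instead applies the matrix determinant lemma, $\det(\alpha I + \ket{v}\bra{w}) = \alpha^{N-1}(\alpha + \braket{w}{v})$, to the same rank-one perturbation of a multiple of the identity. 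These are equivalent computations (the paper's determinant is your traceless eigenvalue raised to a power times your trace eigenvalue), but your spectral version identifies \emph{which} sector degenerates at $n = d-1$, namely the traceless one carrying the undetermined ``stress tensor'' coefficient, which the determinant argument leaves implicit. One small slip: both eigenvalues are also nonzero for $d-1 < n < 2d-2$, so ``precisely'' overstates the claim; what matters is only that the first degeneracy sits at $n=d-1$, where the induction halts. Your evenness argument via global parity of the equation under $z \to -z$ is the same in content as the paper's term-by-term check that every source term at odd total order contains an odd-order (hence vanishing) factor, just packaged more slickly. Finally, the consistency of the $zz$ and $zj$ components, which you list as something to confirm, is not actually required: as the paper notes, the theorem is a \emph{uniqueness} statement, and the extra components could only further constrain the data --- they cannot un-determine coefficients already fixed by the tangential recursion --- so that check belongs to the existence problem, not to this proof.
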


\section{Asymptotic Structure of Extremal Surfaces} \label{sec:main-sec}

The HRT formula \eqref{eq:HRT} relates the entanglement entropy of a codimension-1 boundary region to the area of a codimension-2, spacelike bulk surface. A codimension-2 surface $\Sigma$ has a 2-dimensional normal bundle $\mathrm{N}(\Sigma)$. Since the normal bundle is $2$-dimensional, it admits a unique differential two-form $\mathbf{N}$ up to normalization and sign. We may uniquely specify this two-form by requiring the normalization condition\footnote{The minus sign in \eqref{eq:unit-binormal} comes from the fact that we are considering codimension-2 spacelike surfaces in Lorentz-signature spacetimes. In general, the sign of \eqref{eq:unit-binormal} should be $(-1)^s$, where $s$ is the number of minus signs in the signature of the normal bundle.}
\begin{equation} \label{eq:unit-binormal}
    N_{ab} N^{ab} = - 2
\end{equation}
and fixing the sign of $\mathbf{N}$ to match the orientation of $\Sigma.$ The two-form in the normal bundle of $\Sigma$ satisfying equation \eqref{eq:unit-binormal} is called the \emph{unit binormal} of $\Sigma$. Its normalization is chosen so that if $\epsilon_{a_1 \dots a_d}$ is the full spacetime volume form, then
\begin{equation} \label{eq:induced-volume}
    \frac{1}{2} \epsilon_{a_1 \dots b c} N^{b c}
\end{equation}
is the induced volume form on $\Sigma.$

By foliating a neighborhood of $\Sigma$ with a family of codimension-2 surfaces containing $\Sigma$, we may extend $N_{ab}$ to a binormal field in that neighborhood. In Appendix \ref{app:extremal-surfaces}, we prove that a necessary and sufficient condition for $\Sigma$ to be extremal is that any such extension\footnote{Technically, equation \eqref{eq:binormal-divergence} only holds for foliations with locally integrable normal bundles. In general, only the components of \eqref{eq:binormal-divergence} normal to $\Sigma$ are guaranteed to vanish; the vanishing of the tangent components is a consequence of normal bundle integrability. The Hodge dual of the normal components of \eqref{eq:binormal-divergence}, where the dual is taken with respect to the induced metric on the normal bundle, is proportional to the trace of the extrinsic curvature of $\Sigma$, which is well-known to vanish on extremal surfaces. This statement is proved in Appendix \ref{app:extremal-surfaces}, where we also explain how our proofs can be generalized to remove the assumption of normal bundle integrability.} satisfies
\begin{equation} \label{eq:binormal-divergence}
    \nabla_a N^{ab}|_{\Sigma} = 0.
\end{equation}
Equation \eqref{eq:binormal-divergence} might initially appear to be highly sensitive to the choice of how $\mathbf{N}$ is extended into a neighborhood of $\Sigma$. However, we show in Appendix \ref{app:extremal-surfaces} that $\Del_a N^{ab}|_{\Sigma}$ is independent of the choice of extension; in fact, it satisfies
\begin{equation} \label{eq:binormal-divergence-2}
    \Del_a N^{ab}|_{\Sigma} = Q^{a}{}_{c} \Del_a N^{cb}|_{\Sigma},
\end{equation}
where $Q_{ac}$ is the induced metric on $\Sigma.$ Since $Q_{ac}$ is everywhere tangent to $\Sigma$, the right-hand side of \eqref{eq:binormal-divergence-2} may be thought of as a directional derivative of $\mathbf{N}$ along directions tangent to $\Sigma$; it does not depend on the choice of extension. Even though we could use the expression
\begin{equation}
    Q^{a}{}_{c} \Del_a N^{cb} = 0
\end{equation}
to characterize extremality without worrying about the choice of extension, our calculations in the following subsections are made considerably simpler by using equation \eqref{eq:binormal-divergence} and leaving the (arbitrary) choice of how to extend $\mathbf{N}$ implicit.

Equation \eqref{eq:binormal-divergence} may be thought of as a covariant formulation of the extremality condition that ``both null expansions vanish,'' which appears ubiquitously in the general relativity literature and was applied to the study of holographic entanglement entropy in \cite{HRT}. To see this, note that for the two-parameter family of surfaces defined by extending $\Sigma$ along geodesics in the null normal directions $k^a$ and $\ell^a$, the unit binormal field may be written up to orientation as
\begin{equation} \label{eq:binormal-null}
    N_{ab} = \ell_a k_b - k_a \ell_b,
\end{equation}
where we have chosen to normalize the null vectors such that $\ell^a k_a = 1.$ (This follows from the uniqueness of two-forms on the normal bundle up to normalization.) Taking the divergence of \eqref{eq:binormal-null} and using the condition that the Lie derivative $(\mathscr{L}_{k} \ell)^a$ vanishes along such an extension yields the expression
\begin{equation}
    \Del_a N^{ab}|_{\Sigma} = \left[(\Del_a \ell^a) k^b - (\Del_a k^a) \ell^b\right]_{\Sigma}.
\end{equation}
The divergence of this particular extension of the unit binormal therefore vanishes on $\Sigma$ if and only if both null expansions vanish on $\Sigma.$ Since \eqref{eq:binormal-divergence} is deformation-independent, it can be checked for all extensions of $N_{ab}$ by verifying it for the extension of $N_{ab}$ along geodesics of $\ell^a$ and $k^a$. It follows that equation \eqref{eq:binormal-divergence} and the vanishing of both null expansions are equivalent.

All of the results in this section, and hence all the main results of this paper, follow from analyzing equation \eqref{eq:binormal-divergence} in coordinates adapted to the level sets of an arbitrary defining function $z$. In the following subsections, we solve equation \eqref{eq:binormal-divergence} order-by-order in $z$ to show that any two extremal surfaces homologous to the same boundary region in a given spacetime have both (i) the same (smooth) binormal $\tilde{N}^{\mu \nu} = N^{\mu \nu} / z^2$ up to and including order $z^{d-3}$, and (ii) the same \emph{coordinate} position up to and including order $z^{d-2}$, where $d$ is the spacetime dimension. From these two results, we prove both that finite linear combinations of entanglement entropies are cutoff-independent and that the ``globally minimal'' extremal surface of a boundary region is uniquely defined (except in the special case of multiple, equivalent global minima).
We then comment on state-dependent divergences in the holographic entanglement entropy, showing that extremal surfaces in spacetimes satisfying the additional constraint of the Fefferman-Graham falloff \eqref{eq:AdS-FG} have spacetime-\emph{independent} asymptotic structure. The areas of extremal surfaces in this class of spacetimes can be regulated using a vacuum subtraction protocol.

\subsection{Universal Divergences from Boundary Geometry}

\label{sec:the-hard-one}

Let $\Sigma$ be a boundary-anchored extremal surface in an asymptotically AdS spacetime, and let $z$ be an arbitrary defining function. In coordinates adapted to the level sets of $z$, the spacetime metric takes the form
\begin{equation}
    ds^2 = \frac{1}{z^2} \left[ (1 + O(z)) dz^2 + \tilde{h}_{jk} dx^j dx^k \right].
\end{equation}
In such coordinates, the components of the divergence equation \eqref{eq:binormal-divergence} for the unit binormal form of $\Sigma$ are given by
\begin{eqnarray}
    \tilde{D}_{j} \tilde{N}^{jz} \label{eq:div-jz} & = & 0, \\
    \frac{d-2}{z} \tilde{N}^{zk} - \del_z \tilde{N}^{zk} & = &  \tilde{N}^{zk} \del_{z} \ln\sqrt{|\tilde{h}|} + \tilde{D}_{j} \tilde{N}^{jk}. \label{eq:div-jk}
\end{eqnarray}
Several terms in these equations warrant further explanation. First, we have written both expressions in terms of the \emph{smooth} unit binormal $\tilde{N}^{\mu \nu} = N^{\mu \nu}/z^2$. The $1/z^2$ divergence in the metric implies that a normalized\footnote{A ``normalized'' tensor in this context is one whose contraction with its dual is $O(1)$ in $z$.} tensor with $m$ up-indices and $n$ down-indices scales near the boundary as $z^{m-n}$; it is $\tilde{N}^{\mu \nu},$ not $N^{\mu \nu}$, that admits a smooth expansion in $z$. Second, $\tilde{D}_{j}$ is the covariant derivative with respect to the smooth induced metric $\tilde{h}_{jk}$. It satisfies, e.g.,
\begin{equation}
    \tilde{D}_{j} \tilde{N}^{jk} = \del_j \tilde{N}^{jk} + \tilde{\Gamma}^{j}{}_{j \ell} \tilde{N}^{\ell k},
\end{equation}
where $\tilde{\Gamma}^{\ell}{}{j k}$ are the Christoffel symbols of the smooth induced metric $\tilde{h}_{jk}.$ Finally, although we have suppressed the notation $|_{\Sigma}$ for convenience, both equations \eqref{eq:div-jz} and \eqref{eq:div-jk} should be understood to hold only on $\Sigma.$ We will now use these equations to show that the geometry of the entangling surface $\del \Sigma$ fixes (i) the components of $\tilde{N}^{\mu \nu}$ up to and including order $z^{d-3}$, and (ii) the coordinate position of $\Sigma$ up to and including order $z^{d-2}$. These results constitute a form of universality for extremal surfaces homologous to the same boundary region in a given spacetime.

Since both equations \eqref{eq:div-jz} and \eqref{eq:div-jk} hold in \emph{any} coordinates adapted to level sets of $z$, we are free to choose coordinates along the cutoff surface at our convenience. Such a choice of coordinates does not introduce any additional coordinate-dependence to our proofs; the cutoff prescription is given by the choice of defining function $z$, and coordinates chosen along the cutoff surface do not affect the $z$-dependence of the areas of extremal surfaces. It will be especially useful for us to choose coordinates along the level sets of $z$ such that at $z=0,$ both $x_1$ and $x_2$ are non-tangent to the entangling surface $\del \Sigma$.\footnote{In general, such coordinates can only be chosen in a small neighborhood of $\del \Sigma$, which is sufficient for the purposes of our proof. In fact, it may not always be possible to choose such coordinates in a neighborhood of the full entangling surface $\del \Sigma$; in general, one can only find such coordinates in neighborhoods of the elements of an open cover of $\del \Sigma$. We will ignore this technicality, but all of our proofs can easily be generalized by performing calculations in each of these open sets and patching them together.} These coordinates can then be extended into the bulk by Lie transport along integral curves of $\Del_a z$, and for small $z$ both $x_1$ and $x_2$ will continue to have the property that they are non-tangent to the extremal surface $\Sigma.$ The advantage of such a system of coordinates is that near the boundary, $\Sigma$ can be described locally as the solution to two equations:
\begin{equation} \label{eq:sigma-functions}
    x_1 = f_1(z, x_3, \dots, x_{d-1})
    \quad \text{and} \quad
    x_2 = f_2(z, x_3, \dots, x_{d-1}).
\end{equation}
As we will see shortly, the divergence equations \eqref{eq:div-jz} and \eqref{eq:div-jk} can be used to solve for equations \eqref{eq:sigma-functions} perturbatively in $z$.

For the moment, though, let us return to equation \eqref{eq:div-jk}.\footnote{As in Appendix \ref{app:FG}, only equation \eqref{eq:div-jk} will be necessary for showing the universality of divergences among extremal surfaces that share the same entangling surface $\del \Sigma$; we will later show that equation \eqref{eq:div-jz} is automatically satisfied by the solutions of equation \eqref{eq:div-jk}, guaranteeing the perturbative \emph{existence} of extremal surfaces with particular boundary conditions.} We may expand both the unit binormal $\tilde{N}^{\mu \nu}$ and the smooth induced metric $\tilde{h}_{jk}$ perturbatively in $z$ as
\begin{eqnarray}
    \tilde{N}^{\mu \nu} & = & \tilde{N}^{\mu \nu}_{(0)} + z \tilde{N}^{\mu \nu}_{(1)} + z^2 \tilde{N}^{\mu \nu}_{(2)} + \dots, \label{eq:N-expansion} \\
    \tilde{h}_{jk} & = & \tilde{h}_{jk}^{(0)} + z \tilde{h}_{jk}^{(1)} + z^2 \tilde{h}_{jk}^{(2)} + \dots, \label{eq:h-expansion}
\end{eqnarray}
then plug these power-series expressions into equation \eqref{eq:div-jk} order-by-order to solve for $\tilde{N}^{\mu \nu}.$ At leading order in $z$, this gives the expression
\begin{equation} \label{eq:binormal-orthogonal}
    (d-2) \tilde{N}^{zk}_{(0)} = 0.
\end{equation}
For $d > 2$, this is exactly the statement that \emph{any boundary-anchored extremal surface meets the boundary orthogonally}. More generally, at order $z^{n-1}$ for $n \geq 1$, equation \eqref{eq:div-jk} is given by
\begin{equation} \label{eq:div-obyo}
    (d-2-n) \tilde{N}^{zk}_{(n)} = \frac{1}{2} \sum_{A + B + C = n} C \tilde{N}^{zk}_{(A)} \tilde{h}^{\ell m}_{(B)} \tilde{h}_{\ell m}^{(C)} + (\tilde{D}_{j} \tilde{N}^{jk})^{(n-1)},
\end{equation}
where we have used the metric determinant identity
\begin{equation}
    \del_z \ln\sqrt{|\tilde{h}|} = \frac{1}{2} \tilde{h}^{\ell m} \del_z \tilde{h}_{\ell m}.
\end{equation}
All terms on the right-hand side of \eqref{eq:div-obyo} are of order $z^{n-1}$ or lower in $\tilde{N}^{\mu \nu}$, and of order $z^{n}$ or lower in $\tilde{h}_{\ell m}.$ If we can show that these terms are determined by the equation at order $z^{n-2}$, then equation \eqref{eq:div-obyo} can be used to solve for $\tilde{N}^{zk}_{(n)}$ inductively up to and including order $n=d-3$, beyond which the coefficient on the left-hand side of \eqref{eq:div-obyo} vanishes and the induction breaks down.

There are essentially two obstructions to this inductive analysis. The first is that knowing $\tilde{h}_{\ell m}^{(n)}$ on $\Sigma$ requires knowing the coordinate position of the surface up to order $z^{n}$. While the spacetime metric $\tilde{h}_{\ell m}$ is assumed to be known at all orders in $z$ (since we have fixed a choice of spacetime), \eqref{eq:div-obyo} involves not just an arbitrary expansion in $z$ but an expansion \emph{along the extremal surface}, which deviates from its boundary coordinate position at nonzero $z$. The second obstruction is that \emph{a priori} equation \eqref{eq:div-obyo} only solves for the $zk$ component of $\tilde{N}^{\mu \nu}$ at each order, but takes the $jk$ components as input at the next order. We address these two obstructions in the following lemma.

\begin{lemma} \label{lem:surface-position}
    Let $\Sigma$ be a boundary-anchored extremal surface in an asymptotically AdS spacetime with metric $g_{ab}$ and defining function $z$. Let $x_i$ be coordinates along the level sets of $z$ such that $\Sigma$ locally satisfies two equations of the form
    \begin{equation} \label{eq:sigma-functions-lemma}
    x_1 = f_1(z, x_3, \dots, x_{d-1})
    \quad \text{and} \quad
    x_2 = f_2(z, x_3, \dots, x_{d-1}),
    \end{equation}
    i.e., such that $x_1$ and $x_2$ are non-tangent to $\Sigma$ near the boundary. Then knowledge of (i) the components of the smooth unit binormal $\tilde{N}^{\mu \nu}$ up to and including order $z^n$ and (ii) the position of the entangling surface $\del \Sigma$ suffices to specify $f_1$ and $f_2$ up to and including order $z^{n+1}.$ (In other words, the components of the unit binormal form at order $n$ determine the coordinate position of the surface at order $n+1.$)
    
    Furthermore, this knowledge of $f_1$ and $f_2$ up to and including order $z^{n+1}$ then specifies $\tilde{N}^{jk}$ at order $z^{n+1}.$
\end{lemma}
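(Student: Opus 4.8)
The plan is to exploit the fact that the unit binormal algebraically encodes the tangent plane of $\Sigma$, which is nothing more than the collection of first derivatives of $f_1$ and $f_2$. Near the boundary I parametrize $\Sigma$ by $(z, x_3, \dots, x_{d-1})$, so that its tangent space is spanned by
$$T_z = \partial_z + (\partial_z f_1)\partial_{x_1} + (\partial_z f_2)\partial_{x_2}, \qquad T_A = \partial_{x_A} + (\partial_{x_A}f_1)\partial_{x_1} + (\partial_{x_A}f_2)\partial_{x_2},$$
for $A = 3, \dots, d-1$. Since the binormal is a two-form supported on the normal bundle, every tangent vector lies in its kernel, $N_{\mu\nu}(T_\alpha)^\nu = 0$. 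I would record this as the central identity of the proof, and note that because the $1/z^2$ metric factors cancel in any ratio of components, it holds verbatim for the smooth binormal $\tilde{N}_{\mu\nu}$ obtained by lowering $\tilde{N}^{\mu\nu}$ with $\tilde{g}|_\Sigma$.

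For the first claim I would contract the kernel condition against the transverse directions $x_1, x_2$. Because these are non-tangent, $\tilde{N}_{x_1 x_2}$ is nonvanishing at leading order, so the condition inverts to give each first derivative as a ratio of binormal components, e.g. $\partial_z f_1 = -\tilde{N}_{x_2 z}/\tilde{N}_{x_1 x_2}$ and $\partial_{x_A}f_2 = -\tilde{N}_{x_1 x_A}/\tilde{N}_{x_1 x_2}$, with analogous expressions for the rest. Integrating $\partial_z f_{1,2}$ in $z$, with integration constants fixed by the position of $\partial\Sigma$ at $z=0$, recovers $f_{1,2}$ with one extra power of $z$, since the order-$z^n$ coefficient of $\partial_z f$ is $(n+1)$ times the order-$z^{n+1}$ coefficient of $f$. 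The one circularity is that the ratios involve $\tilde{g}|_\Sigma = \tilde{g}(z, f_1, f_2, x_3, \dots)$, whose order-$z^m$ coefficient already depends on $f_{1,2}$ through order $z^m$. I would dispatch this with a short bootstrap: $f^{(0)}$ is given by $\partial\Sigma$, and assuming $f_{1,2}$ known through order $z^m$ (hence $\tilde{g}|_\Sigma$ known through order $z^m$), the ratio formulas yield $\partial_z f$ through order $z^m$ from $\tilde{N}^{\mu\nu}$ through order $z^m$, and so $f^{(m+1)}$. Iterating to $m=n$ gives $f_{1,2}$ through order $z^{n+1}$ from $\tilde{N}^{\mu\nu}$ through order $z^n$.

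For the converse I would run the same contractions in the opposite direction. Every component of $N_{\mu\nu}$ is then expressed through the derivatives of $f_{1,2}$ and the single scale $\tilde{N}_{x_1 x_2}$; in particular the spatial-spatial block obeys $N_{x_A x_B} = \left[(\partial_{x_A}f_1)(\partial_{x_B}f_2) - (\partial_{x_A}f_2)(\partial_{x_B}f_1)\right]N_{x_1 x_2}$ together with its lower-rank analogues. The scale is then fixed by the normalization $N_{ab}N^{ab} = \tilde{N}_{\mu\nu}\tilde{N}^{\mu\nu} = -2$, and raising indices with $\tilde{g}^{-1}|_\Sigma$ — known through order $z^{n+1}$ since the surface is — converts the $\tilde{N}_{x_A x_B}$ into $\tilde{N}^{jk}$ through order $z^{n+1}$.

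The main obstacle, and the one place where the order-counting is genuinely delicate, is that the normalization couples $\tilde{N}^{jk}$ to the $z$-components through the scale, and these involve $\partial_z f$, which is known only through order $z^n$ given $f$ through order $z^{n+1}$; naively this seems to demand $f$ through order $z^{n+2}$. The escape is the leading-order orthogonality $\tilde{N}^{zk}_{(0)} = 0$ established in \eqref{eq:binormal-orthogonal}, which forces $\tilde{N}^{zk} = O(z)$. Consequently the $z$-components enter the quadratic condition $\tilde{N}_{\mu\nu}\tilde{N}^{\mu\nu} = -2$ only at order $z^2$ and above, and always as products of two factors each of order at least $z$, so the order-$z^{n+1}$ part of the normalization needs each such factor only through order $z^n$ — exactly what $\partial_z f$ through order $z^n$ supplies. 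This lets me solve linearly for the scale at order $z^{n+1}$, fixing $\tilde{N}^{jk}$ at that order and closing the lemma.
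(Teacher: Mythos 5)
Your proof is correct and is essentially the paper's own argument in dual form: where you characterize the binormal by its kernel condition $N_{\mu\nu}T_\alpha^{\nu}=0$ and fix its overall scale by the normalization $N_{ab}N^{ab}=-2$, the paper writes $\mathbf{N}$ directly as the normalized wedge product of the gradients $v_a = \nabla_a(x_1-f_1)$ and $w_a = \nabla_a(x_2-f_2)$ (equation \eqref{eq:binormal-vw}); the resulting ratio formulas for $\partial_z f_{1,2}$, the integration in $z$ anchored at $\partial\Sigma$, and the bootstrap that handles the $\tilde{g}|_{\Sigma}$ circularity are the same in both treatments. One point genuinely in your favor: the order-counting subtlety you isolate in your final paragraph --- that the normalization involves $(\partial_z f)^2$, which naively demands $f$ at order $z^{n+2}$, and is rescued only because the leading-order orthogonality \eqref{eq:binormal-orthogonal} forces $\partial_z f = O(z)$ --- is real, and the paper's proof passes over it silently (it simply asserts that every term in \eqref{eq:smooth-binormal-cutoff} is determined ``up to the order of the left-hand side,'' which is true precisely because of this $O(z)$ suppression); your resolution is correct and is not circular, since \eqref{eq:binormal-orthogonal} follows from the leading order of the extremality equation independently of the lemma. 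The only blemish is a sign slip in your sample formula $\partial_z f_1 = -\tilde{N}_{x_2 z}/\tilde{N}_{x_1 x_2}$ (the paper's conventions give $\partial_z f_1 = -\bar{N}_{z x_2}/\bar{N}_{x_1 x_2}$, the opposite sign), which is immaterial because it is absorbed into the orientation choice for $\Sigma$.
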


\begin{proof}
    We begin by being slightly more careful about the smooth expansions of $\tilde{N}^{\mu \nu}$ and $\tilde{h}_{jk}$ that appear in equations \eqref{eq:N-expansion} and \eqref{eq:h-expansion}. $\tilde{N}^{\mu \nu}$ is only defined on $\Sigma$, since its value off of $\Sigma$ depends on a choice of deformation, so its expansion in $z$ along $\Sigma$ is straightforward. The induced metric $\tilde{h}_{jk}$, however, is known in all of spacetime by assumption. We must be careful in what we mean when we say we ``expand $\tilde{h}_{jk}$ along $\Sigma$.''
    
    For the purposes of this lemma, we distinguish between
    \begin{equation}
        \tilde{h}_{jk}^{\,\Sigma}(z; x_3, \dots, x_{d-1}),
    \end{equation}
    which denotes the components of $\tilde{h}_{jk}$ as a function of $\Sigma$, and the full spacetime metric
    \begin{equation}
        \tilde{h}_{jk}(x_1, x_2, z; x_3, \dots, x_{d-1}),
    \end{equation}
    which is a function of all $d$ spacetime coordinates. Since $x_1$ and $x_2$ on $\Sigma$ are locally expressible as functions of the other coordinates, these two expressions are related by
    \begin{equation}
        \tilde{h}_{jk}^{\,\Sigma}(z; x_3 \dots x_{d-1})
        =
        \tilde{h}_{jk}(f_1(z, x_3, \dots), f_2(z, x_3, \dots), z; x_3, \dots, x_{d-1}).
    \end{equation}
    Differentiating with respect to $z$ gives the expression
    \begin{equation} \label{eq:precise-h-expansion}
        \frac{\del \tilde{h}_{jk}^{\,\Sigma}}{\del z}
        =
        \frac{\del \tilde{h}_{jk}}{\del x_1} \frac{\del f_1}{\del z} + \frac{\del \tilde{h}_{jk}}{\del x_2} \frac{\del f_2}{\del z} + \frac{\del \tilde{h}_{jk}}{\del z}.
    \end{equation}
    Since $\tilde{h}_{jk}$ is known to all orders in $z$, we conclude that knowing $\tilde{h}_{jk}^{\, \Sigma}$ to order $z^n$ requires knowing $f_1$ and $f_2$ to the same order. A similar statement was made in the paragraph preceding the statement of this lemma; equation \eqref{eq:precise-h-expansion} makes that statement precise.
    
    We now proceed to prove the first claim of the lemma, that knowing $\tilde{N}^{\mu \nu}$ up to order $z^n$ specifies $f_1$ and $f_2$ up to order $n+1$. Since $\Sigma$ is a level set of both $x_1 - f_1$ and $x_2 - f_2$, its normal bundle has two preferred one-forms given by the gradients of these expressions:
    \begin{eqnarray}
        v_a & = & \Del_a (x_1 - f_1), \label{eq:v-vector} \\
        w_a & = & \Del_a (x_2 - f_2). \label{eq:w-vector}
    \end{eqnarray}
    Since there is only one two-form on the normal bundle up to orientation and normalization, the unit binormal $\textbf{N}$ must satisfy
    \begin{equation}
        N_{ab} \propto v_a w_b - v_b w_a.
    \end{equation}
    Fixing the constant of proportionality with equation \eqref{eq:unit-binormal} gives the expression\footnote{For convenience, we have oriented $\Sigma$ so that the sign in front of this expression is positive.}
    \begin{equation} \label{eq:binormal-vw}
        N_{ab} = \frac{1}{\sqrt{v^2 w^2 - (v \cdot w)^2}} (v_a w_b - v_b w_a),
    \end{equation}
    where the dot product appearing in the denominator is understood to be the inner product with respect to the metric. Using equations \eqref{eq:v-vector} and \eqref{eq:w-vector}, one can check that the $z1$ and $z2$ components of \textbf{N} satisfy
    \begin{equation}
        N_{z1} = \frac{\del f_2}{\del z} N_{12}
        \quad \text{and} \quad
        N_{z2} = - \frac{\del f_1}{\del z} N_{12}.
    \end{equation}
    Just as we replaced the up-index binormal with its smooth counterpart $\tilde{N}^{\mu \nu} = N^{\mu \nu} / z^2,$ we may replace the down-index binormal with its smooth counterpart $\bar{N}_{\mu \nu} = z^2 N_{\mu \nu}.$ Performing this substitution and solving for the derivatives of $f_1$ and $f_2$ yields the expressions
    \begin{equation}
        \frac{\del f_1}{\del z} = - \frac{\bar{N}_{z2}}{\bar{N}_{12}}
        \quad \text{and} \quad
        \frac{\del f_2}{\del z} = \frac{\bar{N}_{z1}}{\bar{N}_{12}}.
    \end{equation}
    
    We conclude that if $\bar{N}_{\mu \nu}$ is known to order $z^{m}$, then $f_1$ and $f_2$ are known at order $z^{m+1}.$ We may write $\bar{N}_{\mu \nu}$ in terms of $\tilde{N}^{\mu \nu}$ as
    \begin{equation}
        \bar{N}_{\mu \nu} = \tilde{g}_{\mu \lambda}^{\,\Sigma}\, \tilde{g}_{\nu \rho}^{\,\Sigma}\, \tilde{N}^{\lambda \rho}.
    \end{equation}
    It follows that knowledge of $\tilde{N}^{\mu \nu}$ and $\tilde{g}_{\mu \nu}^{\, \Sigma}$ up to order $z^{m}$ specifies $f_1$ and $f_2$ at order $z^{m+1}.$ We argued above, however, that $\tilde{g}_{\mu \nu}^{\, \Sigma}$ at order $z^m$ is specified by knowledge of $f_1$ and $f_2$ at order $z^{m}.$ By specifying $f_1$ and $f_2$ at order $z^0$ as a base case, which we have done by specifying the boundary entangling surface $\del \Sigma$, we may therefore solve for $f_1$ and $f_2$ inductively up to order $z^{n+1}$ by using knowledge of $\tilde{N}^{\mu \nu}$ up to order $z^n,$ as desired.
    
    The second claim of the lemma is now quite straightforward. From equation \eqref{eq:binormal-vw}, we see that the $jk$ components of $N_{\mu \nu}$ satisfy
    \begin{equation}
        N_{jk} = \frac{1}{\sqrt{v^2 w^2 - (v \cdot w)}} (v_j w_k - v_k w_j).
    \end{equation}
    By replacing the inner product in the denominator with an inner product with respect to the \emph{smooth} metric $\tilde{g}_{\mu \nu},$ one can pull out a factor of $z^2$ and obtain the expression
    \begin{equation} \label{eq:smooth-binormal-cutoff}
        \bar{N}_{jk} = \frac{1}{\sqrt{(\tilde{g}_{ab} v^a v^b) (\tilde{g}_{cd} w^c w^d) - (\tilde{g}_{ab} v^a w^b)^2}} (v_j w_k - v_k w_j).
    \end{equation}
    By expanding this equation in powers of $z$ and comparing with the explicit expressions for $v_a$ and $w_a$ given in equations \eqref{eq:v-vector} and \eqref{eq:w-vector}, we see that every term on the right-hand side of equation \eqref{eq:smooth-binormal-cutoff} is determined by knowledge of $f_1$ and $f_2$ up to the order of the left-hand side. It follows that $\bar{N}_{jk}^{(n+1)}$ is determined by $f_1$ and $f_2$ at order $z^n.$ A similar argument to that given in the previous paragraph shows that the smooth raised-index binormal, $\tilde{N}^{jk}_{(n+1)}$, is determined by the same data.
    
    Note that the $zk$ component of $\tilde{N}^{\mu \nu}$ at a given order is \emph{not} determined by knowledge of $f_1$ and $f_2$ at the same order. This is why the divergence equation, \eqref{eq:div-obyo}, is needed to solve for the unit binormal of an extremal surface order-by-order.
\end{proof}

\vspace{0.4cm}

With this lemma in hand, we may now return to equation \eqref{eq:div-obyo} and solve for the unit binormal perturbatively in $z$. As a base case, we showed in equation \eqref{eq:binormal-orthogonal} that $\tilde{N}^{zk}_{(0)}$ vanishes for any boundary-anchored extremal surface. The other components of $\tilde{N}^{\mu \nu}$ at this order, $\tilde{N}^{jk}_{(0)},$ are determined by the geometry of the entangling surface $\del \Sigma.$ At higher orders, we can use equation \eqref{eq:div-obyo} to solve for $\tilde{N}^{zk}_{(n)}$ as a function of $\tilde{N}^{\mu \nu}_{(p)}$ for $p < n$ and $\tilde{h}_{jk}^{(q)}$ for $q \leq n.$ Using Lemma \ref{lem:surface-position} and equation \eqref{eq:div-obyo} in tandem, we can solve for these quantities in terms of determined lower-order quantities.\footnote{One subtlety is that the term in \eqref{eq:div-obyo} involving a covariant derivative, $\tilde{D}_{j} \tilde{N}^{jk},$ might seem to depend on the value of $\tilde{N}^{jk}$ \emph{off} of the extremal surface $\Sigma$, which is not specified by our inductive procedure. Corollary \ref{cor:corollary} in Appendix \ref{app:extremal-surfaces}, however, shows that the covariant divergence of $\mathbf{\tilde{N}}$ depends only on the value of $\mathbf{\tilde{N}}$ on $\Sigma.$} It follows that boundary data alone suffices to specify the smooth unit binormal $\tilde{N}^{\mu \nu}$ up to and including order $z^{d-3},$ after which the inductive procedure breaks down due to the vanishing of the left-hand side of \eqref{eq:div-obyo}. In particular, this implies that \emph{any two extremal surfaces anchored to the same boundary in the same spacetime have the same unit binormal up to and including order $z^{d-3}$}. We also note that taking the divergence of equation \eqref{eq:div-jk} with respect to $\tilde{D}_{k}$ yields the expression
\begin{equation}
    \frac{d-2}{z} \tilde{D}_{k} \tilde{N}^{zk} - \del_z \tilde{N}^{zk} = \tilde{\Gamma}^{\ell}{}_{\ell z} \tilde{D}_{k} \tilde{N}^{zk}.
\end{equation}
Solving this equation order-by-order with the inductive assumption that $(\tilde{D}_{k} \tilde{N}^{zk})_{(q)}$ vanishes for $q < n$, which is satisfied by the base case $q=0$ according to equation \eqref{eq:binormal-orthogonal}, shows that $\tilde{D}_{k} \tilde{N}^{zk}$ vanishes up to and including order $z^{d-3}$ for any surface that satisfies equation \eqref{eq:div-jk}. In other words, solutions to the the cutoff-surface components of the divergence equation, \eqref{eq:div-jk}, automatically satisfy the $z$ component, \eqref{eq:div-jz}, up to the same order. It follows that extremal surfaces anchored to a particular boundary region are not only unique up to order $z^{d-2},$ but exist locally up to this order as well.

We have proven what we set out to prove: that the form of $\tilde{N}^{\mu \nu}$ for an extremal surface with given boundary $\del \Sigma$ is universal up to the addition of terms at order $o(z^{d-3}).$ We stress that by \emph{universal}, we mean that the binormal is universal among extremal surfaces homologous to the same boundary region within a \emph{single} spacetime; we will comment on universality among extremal surfaces in \emph{different} spacetimes in subsection \ref{sec:two-spacetimes}. To see what this universal behavior of the unit binormal means for the area divergences of extremal surfaces, we write the intersection of $\Sigma$ with a level set of $z$ as $\sigma_z.$ $\sigma_z$ is a codimension-3 surface in the full spacetime, and a codimension-2 surface within the level set of $z$. The area of $\sigma_z$ is given by\footnote{Recall that the induced volume form on $\Sigma$ is given by equation \eqref{eq:induced-volume}.}
\begin{equation}
    \text{Area}(\sigma_z) = \frac{1}{2} \int_{\sigma_z} \epsilon_{\dots abc} z^a N^{bc}.
\end{equation}
By replacing each tensor in this expression with its smooth counterpart, we obtain the expression
\begin{equation} \label{eq:area-divergence}
    \text{Area}(\sigma_z) = \frac{1}{2 z^{d-2}} \int_{\sigma_z} \tilde{\epsilon}_{\dots abc} z^a \tilde{N}^{bc}.
\end{equation}
Since $\tilde{N}^{bc}$ is determined to order $z^{d-3}$ by boundary data, it follows that the divergent terms in \eqref{eq:area-divergence} are universal among extremal surfaces with the same boundary anchor. The divergent piece of the area of $\Sigma$ is obtained by integrating \eqref{eq:area-divergence} in $z$ near $z=0$; if the divergent terms in \eqref{eq:area-divergence} are determined by boundary data, then any two extremal surfaces with the same boundary anchor have the same area divergences with respect to the defining function $z$. In particular, this implies that the \emph{difference} in their area is finite in the cutoff prescription provided by the choice of $z$.

Thus far, all of our analysis has involved picking a particular defining function $z$ and showing that boundary divergences of homologous extremal surfaces are universal with respect to $z$. To show that the holographic entanglement entropy is cutoff-covariant, we must verify that the finite difference in area between two homologous extremal surfaces is independent of the choice of defining function. This is the subject of the following subsection.

\subsection{Cutoff-Independence of Finite Differences in Holographic Entropy}
\label{sec:the-big-one}

Given a choice of defining function $z$, the area of an extremal surface $\Sigma$ can be split into a ``finite piece'' and a ``divergent piece'' by introducing a sharp cutoff at some fixed level set $z = z_c$. The area of the extremal surface can then be written as
\begin{equation}
    \text{Area}(\Sigma) = \lim_{z_c \rightarrow 0} \frac{1}{2} \int_{\Sigma > z_c} \epsilon_{....ab} N^{ab},
\end{equation}
where the integral is taken over the portion of $\Sigma$ lying in the compact region $z > z_c.$ By evaluating this integral at finite $z_c$, the area can be written as a Laurent series in $z_c$ with the addition of logarithmic terms. Terms proportional to $z_c^{k}$ for $k>0$ vanish in the limit $z_c \rightarrow 0$, the term proportional to $z_c^{0}$ is called the ``finite piece'' of the area, and the remaining terms are collectively called the ``divergent piece'' of the area. Since we showed in the previous subsection that any two extremal surfaces anchored to the same boundary region have identical divergences, it follows that the difference in their areas under this cutoff prescription is simply the difference in these finite pieces.

Showing that holographic entanglement entropy is cutoff-covariant amounts to showing that this \emph{finite} area difference is independent of cutoff prescription. To show this, we will ask how the finite piece of the area of $\Sigma$ changes when a different defining function is chosen to implement the cutoff. We claim that any two extremal surfaces with the same boundary anchor experience the \emph{same} finite change in area under a change in cutoff prescription; this implies that the finite difference of their areas is cutoff-independent.

Recall from Section \ref{sec:conformal-infinity} that any two defining functions are related by a conformal transformation
\begin{equation}
    z = e^{\omega} z'.
\end{equation}
Changing our cutoff prescription from $z$ to $z'$ therefore amounts to moving our cutoff surface from $z = z_c$ to
\begin{equation} \label{eq:cutoff-transform}
    z = e^{\omega} z_c.
\end{equation}
Note that $\omega$ generally depends on $z$, so equation \eqref{eq:cutoff-transform} should be understood not as an explicit expression for $z$ but as an equation that must be solved to find the position of the new cutoff surface in coordinates adapted to the original cutoff $z$. That this equation admits a unique solution is guaranteed by the fact that the level sets of any defining function smoothly foliate a neighborhood of the boundary (cf. Section \ref{sec:conformal-infinity}).

The change in the regulated area of $\Sigma$ induced by changing the cutoff prescription is simply the area of $\Sigma$ contained between surfaces $z = z_c$ and $z = e^{\omega} z_c.$ As in Lemma \ref{lem:surface-position}, we may choose coordinates $x_1$ and $x_2$ that are non-tangent to $\Sigma$ near the boundary, so an integral over $\Sigma$ may be treated as an integral over the coordinates $z, x_3, \dots, x_{d_1}$ with $x_1$ and $x_2$ fixed to satisfy equations \eqref{eq:sigma-functions}. In these coordinates, the change in the regulated area of $\Sigma$ under a change of cutoff prescription may be written as
\begin{equation}
    \frac{1}{2} \int dx_3 \dots dx_{d-1}\, \int_{z = e^{\omega} z_c}^{z_c} dz\, \epsilon_{3 4 \dots (d-1) z 1 2} N^{12}.
\end{equation}
Terms in the integrand that are finite or zero in the limit $z \rightarrow 0$ will contribute to this expression at order $z_c$ or higher. It follows that the change in the finite piece of the area of $\Sigma$ induced by changing the cutoff prescription is determined by $z$-divergent terms in the integrand. In the previous subsection, however, we ordered that the $z$-divergent terms in this integrand are determined by boundary data alone. It follows that any two extremal surfaces with a common boundary anchor $\del \Sigma$ experience the same finite change in area under a change of cutoff. We conclude, as claimed, that the finite difference in the area of two such surfaces is cutoff-independent.

This result has important implications for the AdS/CFT correspondence. For one, it implies that finite sums of entanglement entropies such as the mutual information of non-adjacent regions,
\begin{equation} \label{eq:MI2}
    I(A:B) = S(A) + S(B) - S(A\cup B),
\end{equation}
are cutoff-independent under the holographic dictionary \eqref{eq:HRT}, since they can be expressed as area differences of extremal surfaces that share a common boundary. Similarly, it implies that the notion of a ``globally minimal surface'' anchored to a particular boundary region is well-defined. One simply considers all extremal surfaces anchored to the boundary region, then picks out the one whose finite area difference with all other candidates is negative. Unless the area difference between two candidates is zero and the global minimum is degenerate, the globally minimal surface is uniquely prescribed. This is an essential component of the holographic dictionary, as being able to identify the globally minimal surface of a particular boundary region is essential both in locating the entanglement wedge  \cite{EW} and in determining the values of finite information-theoretic quantities; for example, determining the mutual information \eqref{eq:MI2} requires knowing which of two candidate extremal surfaces for $A\cup B$ is globally minimal.

A similar argument used to prove the cutoff-independence of finite area differences can be used to prove a corollary mentioned in the introduction of this paper: that two common prescriptions for cutting off the areas of extremal surfaces are equivalent. As mentioned in the introduction, there are two prescriptions in the literature for cutting off the area of an extremal surface. The first, which is the one we have used throughout this paper, is to impose a cutoff $z=z_c$ directly on the boundary-anchored extremal surface $\Sigma.$ The second is to transport the entangling surface $\del \Sigma$ to the cutoff surface $z = z_c$ along integral curves of $z$, then to find a new extremal surface with this transported entangling surface as its anchor. These two prescriptions generally produce different surfaces with different areas; cf. Figure \ref{fig:two-cutoff-prescriptions-1} for an example in vacuum $AdS_3$.

One might initially hope that these two prescriptions are equivalent at the level of a single surface, i.e., that since the ``boundary-anchored'' and ``cutoff-anchored'' surfaces converge in the limit $z_c \rightarrow 0$, their finite pieces are the same. However, this is not the case. The coordinate distance between two such surfaces scales as $z_c^2$, since this is the order at which a boundary-anchored extremal surface deviates from being orthogonal to the boundary. From the previous subsection, though, we know that the area of the codimension-$3$ surface obtained by intersecting an extremal surface with the level set $z = z_c$ scales as $z_c^{3-d}$. These observations collectively imply that the difference in area between the two prescriptions for a single surface scales as $z_c^{5-d}$ --- it vanishes for $d < 5,$ but is nonvanishing (and even divergent!) in higher dimensions.

In analogy with our previous arguments, however, the equivalence between these two prescriptions is saved by the fact that they give equivalent answers for the area \emph{difference} between two extremal surfaces with a common boundary. Before proving this, it will first be useful to understand how one should identify boundary-anchored extremal surfaces with $z_c$-anchored extremal surfaces. Consider, for example, the setup shown in Figure \ref{fig:two-cutoff-prescriptions-2} for a constant-time slice of vacuum $AdS_3$. There are two candidate extremal surfaces for the boundary region $A\cup B$ --- one that is simply the union of the extremal surfaces for $A$ and $B$ individually, and one that connects their boundaries. If one transports $A$ and $B$ to the cutoff surface along integral curves of $z$, there are again two candidate extremal surfaces for the union $A \cup B$. In this case, it is intuitively obvious how one should identify surfaces between the two prescriptions; the ``disconnected'' boundary-anchored surface maps to the ``disconnected'' cutoff-anchored surface, and the connected one maps to the connected one. In general, though, one must be precise about how these surfaces are identified.

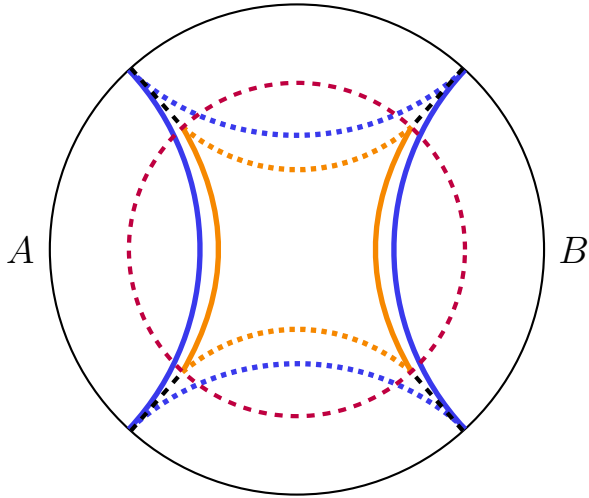
\begin{figure}
    \centering
	\begin{tikzpicture}[scale=1.3, thick]
	\draw (0, 0) circle (2.5cm);

	\draw [line width=2pt, lightgray!30!blue] (-1.7, 1.83) arc (42.89:-42.89:2.69);
	\draw [line width=2pt, lightgray!30!blue] (1.7, 1.83) arc (137.11:222.89:2.69);
	
	\draw [line width=2pt, lightgray!30!blue, dotted] (-1.7, 1.83) arc (227.11:312.89:2.49);
	\draw [line width=2pt, lightgray!30!blue, dotted] (1.7, -1.83) arc (47.11:132.89:2.49);
    
    \draw [line width=2pt, black!15!lime!10!orange] (-1.16, 1.25) to [out=-60, in=60] (-1.16,-1.25);
    \draw [line width=2pt, black!15!lime!10!orange] (1.16, 1.25) to [out=240, in=120] (1.16,-1.25);
    
    \draw [line width=2pt, black!15!lime!10!orange, dotted] (-1.16, 1.25) to [out=-40, in=220] (1.16,1.25);
    \draw [line width=2pt, black!15!lime!10!orange, dotted] (-1.16, -1.25) to [out=40, in=140] (1.16,-1.25);
    
    \draw [ultra thick, dashed] (-1.68, 1.85) to (-1.16, 1.25);
    \draw [ultra thick, dashed] (-1.68, -1.85) to (-1.16, -1.25);
    \draw [ultra thick, dashed] (1.68, 1.85) to (1.16, 1.25);
    \draw [ultra thick, dashed] (1.68, -1.85) to (1.16, -1.25);
    
	\draw [purple, ultra thick, dashed] (0,0) circle (1.7cm);
	
	\node at (-2.8, 0) (A) {\Large $A$};
	\node at (2.8, 0) (B) {\Large $B$};
	
	\end{tikzpicture}
	
	\caption{Two different candidate extremal surfaces for boundary region $A \cup B$ in vacuum $AdS_3,$ both in the ``boundary-anchored'' cutoff prescription and the ``cutoff-anchored'' cutoff prescription. The ``disconnected'' candidates are shown as solid curves, while the ``connected'' candidates are shown as dotted curves. In this instance, it is clear that the ``dotted'' surfaces and ``solid'' surfaces should be identified with one another; to make this precise in general, one must use the language of extremal foliations (cf. Fig \ref{fig:two-cutoff-prescriptions-3}).}
	\label{fig:two-cutoff-prescriptions-2}
\end{figure}

The prescription we propose is as follows. Let $R$ be a region on the boundary of an asymptotically AdS spacetime, and let $R_{z_c}$ be the corresponding region on the cutoff surface obtained by transporting $R$ along integral curves of $z$. Assuming sufficient stability in the extremal surfaces anchored to $R$, the class of extremal surfaces anchored to $R$ should be isomorphic to the class of extremal surfaces anchored to $R_{z_c}$, at least when $z_c$ is small. Roughly speaking, if there exists a stable extremal surface anchored to $R$, then the change in boundary conditions induced by transporting $R$ to $R_{z_c}$ is so small in the limit of small cutoff that there should exist a corresponding solution anchored to $R_{z_c}$. Likewise, a small cutoff should not introduce any new extremal surfaces that do not exist at the boundary. That said, we do not address questions of existence in this paper, and there may be subtleties here that require further investigation. For the moment, we assume that the set of cutoff-anchored extremal surfaces is in bijection with the set of boundary-anchored extremal surfaces.

What, then, is the nature of this bijection? To guide us, think back on the example of the ``disconnected'' and ``connected'' candidate extremal surfaces for a boundary region $A \cup B$ sketched in Figure \ref{fig:two-cutoff-prescriptions-2}. It is clear, intuitively, that the ``dotted'' boundary-anchored surface should map to the dotted cutoff-anchored surface, and likewise that the ``solid'' boundary-anchored surface should map to the solid cutoff-anchored surface. We also note that it is possible to smoothly foliate between the dotted surfaces within the cutoff region, and likewise between the solid surfaces, but not from one to the other. Furthermore, in vacuum $AdS_3,$ these foliations can be constructed such that each leaf is itself an extremal surface. This leads us to conjecture the following (cf. Figure \ref{fig:two-cutoff-prescriptions-3}).
\begin{conjecture}
    Let $R$ be a spacelike, codimension-$1$ subregion of the boundary in an asymptotically AdS spacetime, with $R_{z_c}$ the corresponding ``transported'' region on the cutoff surface. Let $\Sigma$ be an extremal surface homologous to $R$, with $\Gamma_1$ the portion of $\Sigma$ contained in the region $z > z_c$. Then there exists a unique extremal surface $\Gamma_2$ anchored to $R_{z_c}$ such that there exists a foliation of extremal surfaces interpolating between $\Gamma_1$ and $\Gamma_2$, at least when $z_c$ is small. This is the cutoff-anchored extremal surface one should associate to $\Sigma.$
\end{conjecture}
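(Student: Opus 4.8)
The plan is to reduce the conjecture to the deformation theory of \emph{boundary}-anchored extremal surfaces, exploiting the fact that any portion of an extremal surface is again extremal (since $\Del_a N^{ab}|_{\Sigma} = 0$ is a pointwise condition). The central observation is that the desired cutoff-anchored surface $\Gamma_2$ should itself be the $z > z_c$ restriction of some boundary-anchored extremal surface $\Sigma_\lambda$ whose intersection with the cutoff surface coincides exactly with the transported entangling surface $\del R_{z_c}$. If such a $\Sigma_\lambda$ exists, then $\Gamma_2 := \Sigma_\lambda \cap \{z > z_c\}$ is automatically extremal and anchored to $R_{z_c}$, and the interpolating foliation is obtained by deforming the boundary anchor continuously from $\del R$ to the anchor of $\Sigma_\lambda$ and restricting each resulting extremal surface to $z > z_c$.

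First I would set up the ``cutoff-intersection map'' sending a boundary entangling surface $\del R'$ near $\del R$ to the codimension-$3$ surface $\Sigma' \cap \{z = z_c\}$, where $\Sigma'$ is the (assumed unique, stable) boundary-anchored extremal surface with anchor $\del R'$. Because every boundary-anchored extremal surface meets the boundary orthogonally (the base case $\tilde N^{zk}_{(0)} = 0$ of equation \eqref{eq:binormal-orthogonal}) and deviates from pure radial transport only at order $z_c^2$, this map is an $O(z_c^2)$ perturbation of the radial transport map $\del R' \mapsto (\del R')_{z_c}$. Since radial transport is a diffeomorphism onto the cutoff surface, the implicit function theorem yields, for small $z_c$, a unique anchor $\del R_\lambda$ (differing from $\del R$ by $O(z_c^2)$) whose extremal surface $\Sigma_\lambda$ satisfies $\Sigma_\lambda \cap \{z=z_c\} = \del R_{z_c}$. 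This constructs $\Gamma_2$.

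Next I would build the foliation. Choosing a monotone normal deformation $\del R_\mu$, $\mu \in [0,1]$, with $\del R_0 = \del R$ and $\del R_1 = \del R_\lambda$, the family $\Gamma_\mu := \Sigma_\mu \cap \{z > z_c\}$ interpolates between $\Gamma_1$ and $\Gamma_2$ with every leaf extremal. Near the cutoff the leaves are nearly radial, with anchors sweeping a distance $O(z_c^2)$ across the cutoff surface, so the near-boundary expansion controlled order-by-order through Lemma~\ref{lem:surface-position} guarantees that the leaves are disjoint and sweep out a thin collar: two boundary-anchored extremal surfaces with disjoint anchors agree at leading order with the disjoint radial extensions of those anchors, and the higher-order corrections are too small to produce crossings for small $z$. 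Uniqueness of $\Gamma_2$ among surfaces connected to $\Gamma_1$ by an extremal foliation then follows from uniqueness of boundary-anchored extremal surfaces: once the starting leaf $\Gamma_1$ and the target anchor $\del R_{z_c}$ are fixed the construction is rigid, and a competitor in a different topological class (e.g.\ the ``connected'' versus ``disconnected'' candidates of Figure~\ref{fig:two-cutoff-prescriptions-2}) cannot be reached, since an extremal foliation cannot change topology without a leaf degenerating.

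The main obstacle is precisely the global input that the asymptotic machinery of this paper does not supply. Existence and smooth dependence of the boundary-anchored surfaces $\Sigma_\mu$ under perturbed anchors---equivalently, invertibility of the Jacobi (second-variation) operator with Dirichlet data---is exactly the ``sufficient stability'' hypothesis, which is why the statement is left as a conjecture. Moreover, while the near-boundary analysis secures the foliation in a collar of the cutoff surface, showing that $\{\Gamma_\mu\}$ remains a genuine foliation with smooth, non-crossing leaves all the way into the deep interior requires a maximum-principle or a priori estimate for the extremality equation along the \emph{entire} surface, a global statement beyond the reach of the order-by-order expansion. I expect the reduction to restrictions of boundary-anchored surfaces to be robust, so that the difficulty is concentrated entirely in these two standard but genuinely global facts about extremal-surface existence and stability.
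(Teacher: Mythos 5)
The first thing to say is that the paper does not prove this statement: it is explicitly labeled a conjecture, assumed to hold, and flagged by the author as requiring further work on ``the existence of extremal surfaces with particular boundary conditions'' and on ``precise stability conditions.'' So there is no paper proof to compare against; the question is whether your argument closes the conjecture, and it does not --- as you yourself concede in your final paragraph. To give credit where due, your reduction is a genuine sharpening. Since extremality ($\Del_a N^{ab}|_\Sigma = 0$) is a pointwise condition, the restriction of a boundary-anchored extremal surface to $z > z_c$ is automatically extremal, and the paper's own asymptotic results --- orthogonal intersection with the boundary, equation \eqref{eq:binormal-orthogonal}, and the resulting $O(z_c^2)$ deviation from radial transport --- do make your cutoff-intersection map a small perturbation of the radial transport diffeomorphism. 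Framing the conjecture's vague ``sufficient stability'' as invertibility of the Jacobi (second-variation) operator with Dirichlet data is exactly the kind of precision the paper calls for.

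However, three gaps remain, and they are where the entire content of the conjecture lives. First, invertibility of the Jacobi operator and smooth dependence of $\Sigma_\mu$ on its anchor are assumed, not proven; this is the same unproven hypothesis the paper itself makes, so the implicit-function-theorem step inherits rather than resolves it. Second, your foliation is only controlled in a collar of the cutoff surface: in the interior, extremal surfaces with monotonically deformed anchors can intersect in Lorentzian signature --- there is no maximum principle for general (non-minimal, non-maximin) extremal surfaces --- so the family $\{\Gamma_\mu\}$ need not be disjoint, i.e., need not be a foliation at all, without an additional global estimate you do not supply. Third, your uniqueness argument rules out only competitors in a \emph{different} topological class; it does not exclude a second cutoff-anchored extremal surface of the \emph{same} topology (these generically exist, e.g., the two connected branches over a strip in black-brane or confining geometries) being joined to $\Gamma_1$ by some other extremal foliation, and ``an extremal foliation cannot change topology without a leaf degenerating'' is itself an unproven assertion. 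The honest summary is that your proposal is a valuable reformulation --- it reduces the conjecture to Jacobi-operator invertibility plus a global non-crossing statement --- but it is a restatement of what must be proven, not a proof, which is consistent with the paper's own decision to leave this as a conjecture.
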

It would be interesting to address this conjecture more carefully in terms of the existence of extremal surfaces with particular boundary conditions, and to determine the precise stability conditions that must be imposed on $\Sigma$ for it to hold. For the moment, however, we assume it is generally true. A foliation of this type is sketched for vacuum $AdS_3$ in Figure \ref{fig:two-cutoff-prescriptions-3}.

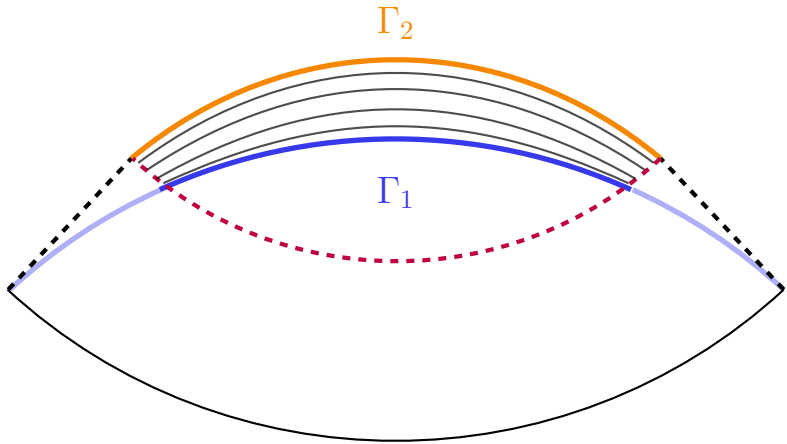
\begin{figure}
    \centering
	\begin{tikzpicture}[scale=3, thick]
	\draw (1.7, -1.83) arc (-47.11:-132.89:2.5);

	\draw [line width=2pt, lightgray!30!blue, opacity=0.4] (1.7, -1.83) arc (47.11:65.5:2.49);
	\draw [line width=2pt, lightgray!30!blue, opacity=1] (1.03, -1.388) arc (65.5:114.5:2.49);
	\draw [line width=2pt, lightgray!30!blue, opacity=0.4] (-1.03, -1.388) arc (114.5:132.89:2.49);
    
    \draw [line width=2pt, black!15!lime!10!orange] (-1.16, -1.25) to [out=40, in=140] (1.16,-1.25);
    
    \draw [ultra thick, dashed] (-1.7, -1.83) to (-1.16, -1.25);
    \draw [ultra thick, dashed] (1.7, -1.83) to (1.16, -1.25);
    
	\draw [purple, ultra thick, dashed] (1.16, -1.25) arc (-47.11:-132.89:1.7cm);
	
	\node [lightgray!30!blue] at (0, -1.4) (Gamma1) {\Large $\Gamma_1$};
	\node [black!15!lime!10!orange] at (0, -0.65) (Gamma2) {\Large $\Gamma_2$};
	\node at (0,-0.2) {};
	
	\draw [opacity=0.7] (-1.13, -1.27) to [out=37, in=143] (1.13,-1.27);
	\draw [opacity=0.7] (-1.09, -1.3) to [out=34, in=146] (1.09,-1.3);
	\draw [opacity=0.7] (-1.05, -1.34) to [out=30, in=150] (1.05, -1.34);
	\draw [opacity=0.7] (-1.02, -1.36) to [out=25, in=155] (1.02, -1.36);
	
	\end{tikzpicture}
	
	\caption{A boundary-anchored extremal surface $\Gamma_1$ and the corresponding cutoff-anchored extremal surface $\Gamma_2$, drawn here in a portion of a constant-time slice of vacuum $AdS_3$. In order to identify these surfaces, we assume the existence of a smooth foliation of extremal surfaces, sketched here as gray curves, that interpolates between the two. Using Stokes' theorem, we can compute the difference in area between $\Gamma_1$ and $\Gamma_2$ by performing an integral along the intersection of this foliation with the cutoff surface $z = z_c$. (Cf. equation \eqref{eq:stokes}.)}
	\label{fig:two-cutoff-prescriptions-3}
\end{figure}

This prescription is more than just a way of mapping boundary-anchored surfaces to cutoff-anchored surfaces; it also provides a mathematically convenient way of comparing the areas of surfaces between the two classes. If $\Gamma_1$ is the portion of the boundary-anchored surface $\Sigma$ inside $z = z_c$, and $\Gamma_2$ is the associated cutoff-anchored extremal surface, then the area difference between the two is given by
\begin{equation}
    \Delta \text{Area}
        = \int_{\Gamma_2} \bm{\alpha} - \int_{\Gamma_1} \bm{\alpha},
\end{equation}
where $\bm{\alpha} = \ast\, \mathbf{N}$ is the induced volume form. The existence of a smooth foliation of extremal surfaces between $\Gamma_1$ and $\Gamma_2$, as shown in Figure \ref{fig:two-cutoff-prescriptions-3}, implies the existence of a smooth field of induced volume forms $\bm{\alpha}$ in a region that interpolates between the two surfaces. Using Stokes' theorem, we may then write the change in area as
\begin{equation} \label{eq:area-change-penultimate}
    \Delta \text{Area}
        = \int_{B} \dd \bm{\alpha} - \int_{S} \bm{\alpha},
\end{equation}
where $B$ is the bulk region foliated by the family of extremal surfaces that interpolate between $\Gamma_1$ and $\Gamma_2,$ and $S$ is the intersection of this foliation with the cutoff surface $z = z_c.$ We show in Appendix \ref{app:extremal-surfaces} that the exterior derivative of the induced volume form on an extremal surface vanishes. This implies that the first term of \eqref{eq:area-change-penultimate} vanishes, and hence that the change in area between $\Gamma_1$ and $\Gamma_2$ is given by
\begin{equation} \label{eq:stokes}
    \Delta \text{Area}
        = - \int_{S} \bm{\alpha} = - \frac{1}{2} \int_{S} \epsilon_{\dots ab} N^{ab}.
\end{equation}
The fact that the coordinate location of $\Sigma$ is specified to order $z^{d-2}$ by boundary data implies that the coordinate location of $S$ can be specified to the same order. This fact, coupled with the fact that all divergent terms in the integrand of \eqref{eq:stokes} are determined by the entangling surface $\del \Sigma,$ implies that the finite part of the area difference between $\Gamma_1$ and $\Gamma_2$ is determined entirely by the entangling surface $\del \Sigma.$ It follows that any two extremal surfaces with the same boundary anchor undergo the same finite change in area when mapping between the ``boundary-anchored'' and ``cutoff-anchored'' cutoff prescriptions. We conclude that these two prescriptions give equivalent answers for the finite area difference between homologous extremal surfaces, as desired.

\subsection{Vacuum Subtraction and State-Dependent Divergences}
\label{sec:two-spacetimes}

Thus far, we have considered the cutoff-covariance of finite area differences between extremal surfaces with a common boundary anchor in a \emph{single} spacetime. However, in Section \ref{sec:conformal-infinity} and Appendix \ref{app:FG}, we showed that asymptotically AdS spacetimes satisfying the Fefferman-Graham falloff condition
\begin{equation} \label{eq:FG-sec3}
    R_{ab} = - (d-1) g_{ab} + o(z^{d-4})
\end{equation}
have universal asymptotic structure with respect to \emph{special} defining functions --- defining functions $z$ that put the metric in the form
\begin{equation}
    ds^2 = \frac{1}{z^2} \left[ dz^2 + \tilde{h}_{jk} dx^{j} dx^{k} \right].
\end{equation}
More specifically, in Theorem \ref{thm:FG} we claimed that in spacetimes satisfying equation \eqref{eq:FG-sec3}, the smooth induced metric $\tilde{h}_{jk}$ is determined up to and including order $z^{d-2}$ by specifying the boundary representative $\tilde{h}_{jk}^{(0)}.$ In subsection \ref{sec:the-hard-one}, we showed that the components of the smooth binormal $\tilde{N}^{\mu \nu}$ are determined up to and including order $z^{d-2}$ by the entangling surface $\del \Sigma$ and components of the metric up to and including order $z^{d-3}.$ If two spacetimes have the same components of $\tilde{h}_{jk}$ to order $z^{d-3}$, then it follows that extremal surfaces that share a common boundary region have universal divergences in \emph{both} spacetimes. We say that such spacetimes have state-\emph{independent} divergences in the holographic entanglement entropy.

Note that this condition is actually weaker than the Fefferman-Graham falloff condition; to specify the metric to order $z^{d-3},$ we need only impose equation \eqref{eq:FG-sec3} up to the addition of terms at order $o(z^{d-5}).$ In particular, in $d=3$, \emph{any} asymptotically AdS spacetime has state-independent divergences in the holographic entanglement entropy. This statement is made precise in the following theorem. The proof follows immediately from arguments given in subsection \ref{sec:the-hard-one}.

\begin{theorem} \label{thm:state-dependence}
    Let $\mathcal{M}_{1}$ and $\mathcal{M}_{2}$ be two asymptotically anti-de Sitter spacetimes of spacetime dimension $d$, both of which satisfy
    \begin{equation} \label{eq:relaxed-FG}
        R_{ab} = - (d-1) g_{ab} + o(z^{d-5}).
    \end{equation}
    Fix a conformal representative $\tilde{h}_{jk}^{(0)}$ of the boundary metric, and let $z$ be its special defining function. If $\Sigma_1$ and $\Sigma_2$ are extremal surfaces in the two different spacetimes that have the same boundary anchor at conformal infinity, then they have
    \begin{enumerate}[(i)]
        \item the same coordinate position in coordinates adapted to level sets of $z$ up to and including order $z^{d-2}$, and
        \item the same smooth unit binormal $\tilde{N}^{\mu \nu}$ up to and including order $z^{d-3}.$
    \end{enumerate}
\end{theorem}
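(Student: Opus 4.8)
The plan is to recognize that the entire order-by-order construction of subsection \ref{sec:the-hard-one} takes as input only the boundary anchor $\del \Sigma$ together with a finite number of coefficients in the $z$-expansion of the smooth induced metric $\tilde{h}_{jk}$, and then to use the relaxed falloff \eqref{eq:relaxed-FG} to force those coefficients to coincide in $\mathcal{M}_1$ and $\mathcal{M}_2$. Since the statement is really a matter of matching the orders to which two independent inductions ``reach,'' I would organize the argument around that order-counting.

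First I would apply Theorem \ref{thm:FG} with $k = d-5$. Because $(d-5)+2 = d-3 < d-2$, the theorem guarantees that $\tilde{h}_{jk}$ is uniquely determined as an even power series in $z$, up to and including order $z^{d-3}$, by the single boundary representative $\tilde{h}_{jk}^{(0)}$. As $\mathcal{M}_1$ and $\mathcal{M}_2$ share the fixed representative $\tilde{h}_{jk}^{(0)}$ and are regulated by the \emph{same} special defining function $z$, their smooth induced metrics therefore agree coefficient-by-coefficient, as full functions on the cutoff surfaces (including their dependence on the non-tangent coordinates $x_1, x_2$), through order $z^{d-3}$. This is the only place the hypothesis \eqref{eq:relaxed-FG} is used.

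Next I would re-run the solution of \eqref{eq:div-obyo} exactly as in subsection \ref{sec:the-hard-one}, tracking which metric data enters at each step. The base case $\tilde{N}^{zk}_{(0)} = 0$ from \eqref{eq:binormal-orthogonal} is spacetime-independent; thereafter, solving \eqref{eq:div-obyo} for $\tilde{N}^{zk}_{(n)}$ invokes only the binormal coefficients already fixed at orders below $n$ and the metric coefficients $\tilde{h}^{(q)}_{\ell m}$ with $q \le n$, while Lemma \ref{lem:surface-position} promotes the binormal at order $n$ to the surface position $f_1, f_2$ at order $n+1$ and the components $\tilde{N}^{jk}$ at order $n+1$ (the potentially off-surface term $\tilde{D}_j \tilde{N}^{jk}$ being under control by Corollary \ref{cor:corollary}). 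The induction for $\tilde{N}^{zk}$ halts when the coefficient $d-2-n$ on the left of \eqref{eq:div-obyo} vanishes at $n = d-2$, so the last determined order is $n = d-3$, and every metric coefficient summoned along the way sits at order $\le d-3$. Since those coefficients are identical in the two spacetimes by the previous step, and $\del \Sigma$ is identical by hypothesis, every output of the induction agrees: the binormal $\tilde{N}^{\mu \nu}$ through order $z^{d-3}$, giving claim (ii), and, via Lemma \ref{lem:surface-position}, the surface position through order $z^{d-2}$, giving claim (i).

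The main obstacle is precisely the bookkeeping of the preceding paragraph: I must confirm that \eqref{eq:relaxed-FG}, which pins down the metric only through $z^{d-3}$ rather than through the $z^{d-2}$ supplied by the full Fefferman-Graham falloff \eqref{eq:AdS-FG}, is nevertheless \emph{exactly} enough to close the induction to the stated orders. The delicate point is reaching the surface position at order $z^{d-2}$ without ever calling on a metric coefficient beyond $z^{d-3}$: in Lemma \ref{lem:surface-position} the position at order $z^{d-2}$ is built by lowering indices on $\tilde{N}^{\mu \nu}$ at order $z^{d-3}$ with $\tilde{g}^{\,\Sigma}_{\mu \nu}$ evaluated along $\Sigma$, and \eqref{eq:precise-h-expansion} shows that this on-surface metric is itself determined by $f_1, f_2$ at order $z^{d-3}$ and the spacetime coefficients at order $z^{d-3}$. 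Tracking the orders shows the system closes with no higher coefficient ever required, so the half-order gap between \eqref{eq:relaxed-FG} and the full falloff is exactly absorbed, and the proof does indeed follow immediately from subsection \ref{sec:the-hard-one}.
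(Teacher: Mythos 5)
Your proposal is correct and takes essentially the same approach as the paper: the paper's proof consists precisely of invoking Theorem \ref{thm:FG} with $k = d-5$ (so that $\tilde{h}_{jk}$ is fixed through order $z^{d-3}$ by the shared boundary representative) and then observing that the inductive construction of subsection \ref{sec:the-hard-one} consumes metric coefficients only up to that order. Your explicit order-counting merely spells out the bookkeeping that the paper leaves implicit when it states the result ``follows immediately'' from those arguments.
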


An identical proof to that given in subsection \ref{sec:the-big-one} to show cutoff-invariance of finite area differences within a given spacetime shows that the difference in area between $\Sigma_1$ and $\Sigma_2$ is both (i) finite and (ii) independent of the choice of special defining function $z$. In particular, this implies that the area of an extremal surface in an arbitrary Fefferman-Graham spacetime --- or, in fact, any spacetime satisfying the weaker falloff condition given by \eqref{eq:relaxed-FG} --- can be regulated in a cutoff-independent manner by subtracting off the area of a corresponding surface in vacuum AdS. This is exactly the ``vacuum subtraction'' protocol that appears throughout the AdS/CFT literature.

We stress that the vacuum subtraction protocol is \emph{only} valid for AdS spacetimes satisfying the relaxed Fefferman-Graham falloff \eqref{eq:relaxed-FG}. For spacetimes with slower matter falloffs, the areas of extremal surfaces in different spacetimes have different divergences, and cannot be compared; the vacuum-subtracted entropy of a subregion in any such spacetime is infinite. As discussed in Appendix \ref{app:FG-falloff}, these spacetimes are unusual in the sense that they have nonvanishing local energy in a neighborhood of the boundary. However, a class of holographic CFT states with state-dependent divergences in the entanglement entropy was constructed using field theory arguments in \cite{MW}; Theorem \ref{thm:state-dependence} implies that the spacetimes dual to these CFT states must violate equation \eqref{eq:relaxed-FG}.

That said, there are quantities that compare entanglement in different states that may be finite even when the vacuum-subtracted entropy is infinite. The \emph{quantum relative entropy}, for example, given by
\begin{equation}
    S(\rho || \sigma) \equiv \tr(\rho \log \rho) - \tr(\rho \log \sigma),
\end{equation}
can be written in terms of the modular Hamiltonian $H_{\sigma} = - \log \sigma$ as
\begin{equation} \label{eq:rel-ent}
    S(\rho || \sigma) = \langle H_{\sigma} \rangle_{\rho} - S(\rho).
\end{equation}
Both terms in this expression are divergent in the limit as the cutoff is removed. However, path integral arguments given in \cite{MW} suggest that the relative entropy of a subregion between two different holographic states is finite even when the states have different divergences in the entanglement entropy. One could check this analytically when $\sigma$ is a ball-shaped subregion of the CFT vacuum, since the modular Hamiltonian for such states has an explicit closed form \cite{BCHM}. In general, a more careful accounting of divergences in the modular Hamiltonian for arbitrary states would be required to verify the finiteness of \eqref{eq:rel-ent} from the general relativity side of the holographic dictionary.

\section{Discussion}
\label{sec:discussion}

In this paper, we have addressed issues involving radial cutoffs for the holographic entanglement entropy in AdS/CFT, and have shown that the holographic prescription for computing finite, information-theoretic combinations of entanglement entropies in a holographic CFT is cutoff-invariant. This result implies, in particular, that the notion of a ``globally minimal'' surface homologous to a given boundary region among several extremal candidates is independent of the cutoff used to compare their areas. We have also attempted to construct a unified toolkit for the covariant analysis of extremal surfaces, detailed in Appendix \ref{app:extremal-surfaces}. This kit consists of four equivalent, covariant ways of characterizing extremal surfaces, including the ``$k$-normal form divergence'' characterization used in this paper to prove the cutoff-covariance of holographic entanglement entropy. We hope these tools will prove broadly useful in the study of extremal surfaces both inside and outside of AdS/CFT.

Our main results are summarized as follows. First, any two extremal surfaces in an asymptotically anti-de Sitter spacetime that limit to the same boundary region have identical area divergences with respect to any choice of defining function $z$. Second, the finite quantity obtained by subtracting the formally divergent areas of two such surfaces is independent of the choice of defining function. Third, if one imposes the additional constraint of the relaxed Fefferman-Graham falloff condition \eqref{eq:relaxed-FG}, then the area divergence of an extremal surface with respect to a \emph{special defining function} (cf. Section \ref{sec:conformal-infinity}) is universal among all extremal surfaces anchored to the same boundary region in \emph{any} spacetime satisfying the falloff. Fourth, the finite difference in area between two such surfaces in \emph{different} spacetimes, a special case of which is the vacuum-subtracted entanglement entropy, is independent of the choice of special defining function $z$.

As a final comment, we note that our covariant techniques for analyzing extremal surfaces are all purely local. This means that they can be used to study any extremal surface, not just the ones that are ``globally minimal'' within a certain class. Such extremal surfaces appear in many interesting places in AdS/CFT: for example, as subleading saddles in the worldvolume action of a bulk brane \cite{M1998, GW1999}, or as probes of non-spatial entanglement in the study of holographic entwinement \cite{entwinement}. Another potentially interesting avenue for future work would be to apply the covariant methods of analysis presented in Appendix \ref{app:extremal-surfaces}, all of which are developed in terms of smooth deformations of hypersurfaces, to the ``shape deformations'' of entangling surfaces that have appeared in the study of the quantum focusing conjecture and the quantum null energy condition \cite{qnec1, qnec2, qnec3, qnec4, qnec5, qnec6, qnec7}.

\acknowledgments{I am greatly indebted to Bob Wald for his insistence on mathematically precise physics, without which this work would not exist. I am also very grateful to Matt Headrick, Veronika Hubeny, Adam Levine, Geoff Penington, Gautam Satishchandran, Antony Speranza, and Douglas Stanford for useful conversations, and especially to Sean Colin-Ellerin, Patrick Hayden, Marija Toma\u{s}evi\'{c}, and Bob Wald for comments on an early draft of this paper. I also thank Ben McKenna, an ever-useful mathematician contact, for pointing me to the proof of the matrix determinant lemma that appears in Appendix \ref{app:FG}. Major parts of this paper were completed at the Yukawa Institute for Theoretical Physics during workshop YITP-T-19-03 and at the Centro de Ciencias de Benasque during the 2019 workshop ``Gravity: new perspectives from strings and higher dimensions.'' This work was supported by AFOSR (FA9550-16-1-0082), the Simons It from Qubit collaboration, and DOE Award No. DE-SC0019380.}

\appendix

\section{Fefferman-Graham Revisited}
\label{app:FG}

In this appendix, we prove the generalized Fefferman-Graham theorem presented in Section \ref{sec:conformal-infinity}. For convenience, we restate this theorem here:

\setcounter{theorem}{0}
\begin{theorem}[Generalized Fefferman-Graham Expansion]
    \label{thm:FG-app}
    Let $z$ be a special defining function for a $d$-dimensional, asymptotically anti-de Sitter spacetime, with coordinates $x^j$ chosen along the level sets of $z$ so that the metric takes the form
    \begin{equation} \label{eq:FG-app}
        ds^2 = \frac{1}{z^2} \left[ dz^2 + \tilde{h}_{jk} d x^j dx^k \right].
    \end{equation}
    Suppose, further, that the spacetime metric $g_{ab}$ satisfies the metric falloff condition
    \begin{equation} \label{eq:app-thm-falloff}
        R_{ab} = - (d-1) g_{ab} + o(z^k)
    \end{equation}
    for some integer $k \geq -2.$ Then the smooth induced metric $\tilde{h}_{jk}$ is uniquely determined as a power series in $z$ up to and including order $z^{k+2}$ or $z^{d-2}$, whichever comes first. This power series is even in $z$.
\end{theorem}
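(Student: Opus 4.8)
The plan is to insert the Fefferman–Graham form \eqref{eq:FG-app} into the falloff condition \eqref{eq:app-thm-falloff} and extract a recursion relation for the coefficients $\tilde{h}_{jk}^{(n)}$ in the expansion $\tilde{h}_{jk} = \sum_n z^n \tilde{h}_{jk}^{(n)}$, solving order-by-order in $z$. First I would compute the Ricci tensor of the metric \eqref{eq:FG-app} in the adapted coordinates. Because $z$ is a \emph{special} defining function, the $g_{zz}$ component is exactly $1/z^2$ with no higher corrections, which dramatically simplifies the computation: the radial lapse is fixed and the shift vanishes, so the geometry is essentially that of a Gauss-normal (geodesic) foliation with radial coordinate $z$. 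The relevant object is the family of induced metrics $\tilde{h}_{jk}(z)$ and its $z$-derivatives, which play the role of extrinsic-curvature data for the foliation. I would organize the Ricci components into the three natural blocks, $R_{zz}$, $R_{zj}$, and $R_{jk}$, using the standard radial-ADM (Gauss–Codazzi) decomposition; the cleanest route is to write everything in terms of $\tilde{h}_{jk}$ and $\partial_z \tilde{h}_{jk}$ rather than recomputing Christoffel symbols from scratch.

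Next I would substitute into \eqref{eq:app-thm-falloff}. The key structural observation is that the $R_{jk}$ equation, after multiplying through by the appropriate power of $z$ to clear the conformal factor, takes the schematic form
\begin{equation}
    \bigl[\,n(n-d+2)\,\bigr]\,\tilde{h}_{jk}^{(n)}
    = (\text{terms built from }\tilde{h}_{jk}^{(m)}\text{ with } m < n),
\end{equation}
so that the coefficient $\tilde{h}_{jk}^{(n)}$ is algebraically determined by lower-order data precisely when the prefactor $n(n-d+2)$ is nonzero. The falloff \eqref{eq:app-thm-falloff} controls the inhomogeneous right-hand side: the AdS background value $-(d-1)g_{ab}$ is reproduced at leading order, and the $o(z^k)$ correction means the recursion is sourced only by known geometric quantities up through order $z^{k+2}$ (the $+2$ coming from the $1/z^2$ conformal rescaling that relates $R_{ab}$ to $\tilde{h}_{jk}$). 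This is what fixes the $z^{k+2}$ cutoff on determinacy. The integer prefactor $n(n-d+2)$ vanishes at $n=0$ (fixed by the boundary data $\tilde{h}_{jk}^{(0)}$, which is an input, not something the equation determines) and at $n=d-2$, which is exactly the order at which the recursion stalls independently of the matter — hence the ``$z^{d-2}$, whichever comes first'' clause.

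For the evenness claim, I would argue inductively that the recursion couples only coefficients of the same parity: each term on the right-hand side that produces $\tilde{h}_{jk}^{(n)}$ carries an even number of $z$-derivatives relative to the coefficients it involves, and the $R_{zz}$ and $R_{zj}$ constraint equations (which must also hold order-by-order) force the odd-order coefficients to vanish up to the relevant order. Concretely, with the base case $\tilde{h}_{jk}^{(1)} = 0$ established from the $n=1$ equation, evenness propagates because the source for any odd $n$ is built entirely from lower odd coefficients, all of which vanish by the inductive hypothesis. I expect the main obstacle to be bookkeeping in the $R_{jk}$ computation: correctly tracking how the trace terms, the $\tilde{h}^{\ell m}\partial_z\tilde{h}_{\ell m}$ factors, and the intrinsic Ricci tensor of $\tilde{h}_{jk}^{(0)}$ combine to produce exactly the prefactor $n(n-d+2)$, and verifying that the $R_{zz}$ and $R_{zj}$ equations are genuinely consistency conditions (automatically satisfied, or imposing only the parity constraint) rather than over-determining the system. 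This is the analogue of checking that the $z$-component of the divergence equation is automatically satisfied in subsection \ref{sec:the-hard-one}, and I would handle it by the same strategy of taking an appropriate trace/divergence of the dynamical equation and solving the resulting scalar recursion.
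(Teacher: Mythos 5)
Your overall route is the same as the paper's: expand $\tilde h_{jk}$ in powers of $z$, insert it into the $jk$ components of the Ricci falloff condition, and solve the resulting algebraic recursion order by order, with parity propagating inductively from the vanishing of $\tilde h^{(1)}_{jk}$. However, your central structural claim --- that the recursion reads $[\,n(n-d+2)\,]\,\tilde h^{(n)}_{jk} = (\text{lower-order data})$ --- is wrong, and the error sits precisely where the content of the theorem lies. Computing $R_{jk}$ for the metric \eqref{eq:FG-app} (the paper's equations \eqref{eq:cutoff-ricci} and \eqref{eq:generic-Ricci-expansion}) gives an identity-part coefficient proportional to $q(d-1-q)$, which degenerates at $q=0$ and $q=d-1$, not at $q=d-2$. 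Your version contradicts the statement you are proving: if the recursion stalled at $n=d-2$, the metric would be determined only up to and including order $z^{d-3}$, not $z^{d-2}$; the undetermined coefficient must sit at order $z^{d-1}$ (the boundary dimension), which is where the recursion genuinely fails. Your formula also breaks at the base case: at $n=1$ it gives $3-d$, which vanishes in $d=3$, so in AdS$_3$ you could not establish $\tilde h^{(1)}_{jk}=0$, the very seed of your parity induction; the correct coefficient $1\cdot(d-2)$ is nonzero for all $d>2$.

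Second, and independently, the operator acting on $\tilde h^{(q)}_{jk}$ is not a scalar: the $\frac{1}{z}\,\tilde h_{jk}\,\del_z\ln\sqrt{|\tilde h|}$ term in $R_{jk}$ contributes a rank-one piece coupling $\tilde h^{(q)}_{jk}$ to its own trace, so the equation to invert is
\begin{equation}
    \frac{q}{2}\left[(d-1-q)\,\delta_{j}{}^{\ell}\delta_{k}{}^{m} + \tilde h_{jk}^{(0)}\,\tilde h^{\ell m}_{(0)}\right]\tilde h^{(q)}_{\ell m} = (\text{known lower-order data}),
\end{equation}
and uniqueness of $\tilde h^{(q)}_{jk}$ requires proving that this identity-plus-rank-one matrix is invertible. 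The paper does this with the matrix determinant lemma: the determinant is proportional to $q^{N}(d-1-q)^{N-1}(2d-2-q)$ with $N=d(d-1)/2$, nonvanishing exactly in the range $0<q<d-1$. (Equivalently, you could first take the trace of the equation to solve for $\tilde h^{\ell m}_{(0)}\tilde h^{(q)}_{\ell m}$ and then substitute back.) Your schematic scalar prefactor silently skips this step, which is the step that both establishes uniqueness and locates the obstruction at $q=d-1$. A final, smaller point: you do not need the $R_{zz}$ and $R_{zj}$ equations at all. The theorem is a uniqueness statement, so the $jk$ equation alone suffices; parity follows because, for odd $q$, every term in the sums over $A+B+C=q$ contains at least one odd-order (hence vanishing, by induction) coefficient, and the intrinsic Ricci tensor of an even power series is itself even. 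Invoking the constraint equations to ``force'' odd coefficients to vanish is both unnecessary and not where the vanishing actually comes from; checking those components would amount to an existence question, which the paper deliberately sets aside.
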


\setcounter{theorem}{4}

Usually, this theorem is presented without allowing arbitrary falloff in equation \eqref{eq:app-thm-falloff}. The standard falloff condition imposed is the \emph{Fefferman-Graham falloff}, given by
\begin{equation} \label{eq:app-FG-falloff}
    R_{ab} = - (d-1) g_{ab} + o(z^{d-4}).
\end{equation}
However, the proof of the theorem is not any more difficult if one relaxes this condition to a generic $z^k$ falloff, and spacetimes with slower falloff have been studied in the AdS/CFT literature before \cite{MW}. As such, we prove the theorem in full generality in subsection \ref{app:FG-proof}. In subsection \ref{app:FG-falloff}, we comment on the typicality of the Fefferman-Graham falloff condition among physically reasonable spacetimes, and argue that it is fairly generic among spacetimes of physical interest.

\subsection{The Fefferman-Graham Expansion with Arbitrary Falloff}
\label{app:FG-proof}

Proving Theorem \ref{thm:FG-app} is as simple as writing out equation \eqref{eq:app-thm-falloff} in coordinates adapted to level sets of the defining function and matching both sides order-by-order in $z$. In any system of coordinates, the components of the Ricci tensor can be written in terms of Christoffel symbols as
\begin{equation}
    R_{\mu \nu}
        = \del_{\alpha} \Gamma^{\alpha}{}_{\mu \nu}
            - \del_{\mu} \del_{\nu} \ln\sqrt{|g|}
            + \Gamma^{\alpha}{}_{\mu \nu} \del_{\alpha} \ln\sqrt{|g|}
            - \Gamma^{\alpha}{}_{\mu \beta} \Gamma^{\beta}{}_{\nu \alpha},
\end{equation}
where $\del_{\mu}$ denotes a coordinate derivative and $g$ is the determinant of the metric. By solving for the Christoffel symbols using the metric form given in equation \eqref{eq:FG-app}, one can show that the components of the Ricci tensor satisfy the following equations:
\begin{eqnarray}
    R_{zz} & = & - \frac{d-1}{z^2} - \del_z \del_z \ln\sqrt{|\tilde{h}|} + \frac{1}{z}\, \del_z \ln\sqrt{|\tilde{h}|} + \frac{1}{4} (\del_z \tilde{h}^{\ell m})  (\del_z \tilde{h}_{\ell m}) \label{eq:ricci-zz} \\
    R_{zj} & = & \frac{1}{2}\, \del_{\ell} (\tilde{h}^{\ell m}\, \del_z \tilde{h}_{m j})
                - \del_j \del_z \ln\sqrt{|\tilde{h}|}
                + \frac{1}{2}\, \tilde{h}^{\ell m}\, (\del_z \tilde{h}_{j m})\, \del_{\ell} \ln\sqrt{|\tilde{h}|} \nonumber \\
                && - \frac{1}{2}\, (\tilde{h}^{\ell m}\, \del_z \tilde{h}_{m n})\, \tilde{\Gamma}^{n}{}_{j \ell} \label{eq:ricci-zj}\\
    R_{jk} & = & - \frac{d-1}{z^2}\, \tilde{h}_{jk} + \frac{d-2}{2 z}\, \del_z \tilde{h}_{jk}
                - \frac{1}{2}\, \del_z \del_z \tilde{h}_{jk}
                + \frac{1}{z}\, \tilde{h}_{jk}\, \del_z \ln\sqrt{|\tilde{h}|} \nonumber \\
                && - \frac{1}{2}\, (\del_z \tilde{h}_{jk})\, \del_z \ln\sqrt{|\tilde{h}|} + \frac{1}{2}\, (\del_z \tilde{h}_{\ell j})\, \tilde{h}^{\ell m}\, (\del_z \tilde{h}_{m k}) + \tilde{R}_{jk}. \label{eq:cutoff-ricci}
\end{eqnarray}
Here $z$ is the special defining function used to write the metric in the form \eqref{eq:FG-app}, and $j, k, \ell, m, n$ are indices that correspond to coordinates along the level sets of $z$. Several quantities in these expressions are associated to the smooth induced metric $\tilde{h}_{jk}$: $\tilde{R}_{jk}$ is its Ricci tensor, $\tilde{\Gamma}^{\ell}{}_{m n}$ are its Christoffel symbols, and $\tilde{h}$ is its metric determinant. We will only actually use the expression for the $jk$ components, equation \eqref{eq:cutoff-ricci}, since the $jk$ components of equation \eqref{eq:app-thm-falloff} suffice to specify $\tilde{h}_{jk}$ up to the desired order in $z$.

Matching the $jk$ components of \eqref{eq:app-thm-falloff} using the coordinate expression for the Ricci tensor given in \eqref{eq:cutoff-ricci} yields the following expression:
\begin{align} \label{eq:obyo-ricci}
    o(z^k)
            = & \frac{d-2}{2 z}\, \del_z \tilde{h}_{jk}
                - \frac{1}{2}\, \del_z \del_z \tilde{h}_{jk}
                + \frac{1}{z}\, \tilde{h}_{jk}\, \del_z \ln\sqrt{|\tilde{h}|} \nonumber \\
                & - \frac{1}{2}\, (\del_z \tilde{h}_{jk})\, \del_z \ln\sqrt{|\tilde{h}|} + \frac{1}{2}\, (\del_z \tilde{h}_{\ell j})\, \tilde{h}^{\ell m}\, (\del_z \tilde{h}_{m k}) + \tilde{R}_{jk}.
\end{align}
The smooth induced metric $\tilde{h}_{jk}$ can be expanded as a formal power series in $z$ as
\begin{equation}
    \tilde{h}_{jk}
        = \tilde{h}_{jk}^{(0)} + z \tilde{h}_{jk}^{(1)} + \dots + z^{n} \tilde{h}_{jk}^{(n)} + \dots,
\end{equation}
then plugged into equation \eqref{eq:obyo-ricci} order-by-order to solve for the induced metric. As a base case for an inductive procedure, we begin by checking the leading order, $1/z$ term of this expression. 

If the falloff order $k$ is $-2$, then Theorem \ref{thm:FG-app} claims that $\tilde{h}_{jk}$ is only uniquely determined at order $\tilde{h}_{jk}^{(0)}$. At this order, $\tilde{h}_{jk}$ is simply the conformal representative of the boundary metric picked out by our choice of defining function $z$. Terms of higher order cannot be determined using equation \eqref{eq:obyo-ricci}, as the leading order, $1/z$ term in the equation is undetermined. If, however, $k$ is greater than $-2$, then the $1/z$ term of \eqref{eq:obyo-ricci} reduces to
\begin{equation} \label{eq:pre-MDL}
    0
            = \frac{d-2}{2}\, \tilde{h}_{jk}^{(1)}
                + \frac{1}{2}\, \tilde{h}_{jk}^{(0)}\, \tilde{h}^{\ell m}_{(0)}\, \tilde{h}_{\ell m}^{(1)},
\end{equation}
where we have expanded terms involving the determinant of the smooth induced metric according to
\begin{equation} \label{eq:metric-det-der-expansion}
    \del_z \ln\sqrt{|\tilde{h}|}
        = \frac{1}{2} \tilde{h}^{\ell m} \del_z \tilde{h}_{\ell m}.
\end{equation}
The tensor equation given by \eqref{eq:pre-MDL} constitutes a linear system of equations for the components of $\tilde{h}_{\ell m}^{(1)}.$ To see this more clearly, we may rewrite equation \eqref{eq:pre-MDL} as
\begin{equation}
    0
            = \left[ \frac{d-2}{2}\, \delta_{j}{}^{\ell} \delta_{k}{}^{m}
                + \frac{1}{2}\, \tilde{h}_{jk}^{(0)}\, \tilde{h}^{\ell m}_{(0)}\,  \right] \tilde{h}_{\ell m}^{(1)}.
\end{equation}
The solution of $\tilde{h}_{\ell m}^{(1)}$ is uniquely determined whenever the matrix
\begin{equation}
    M_{jk}{}^{\ell m}
            = \frac{d-2}{2}\, \delta_{j}{}^{\ell} \delta_{k}{}^{m}
                + \frac{1}{2}\, \tilde{h}_{jk}^{(0)}\, \tilde{h}^{\ell m}_{(0)}
\end{equation}
is invertible, i.e., whenever its determinant is nonzero.

A priori, it may seem like a difficult problem to determine the invertiblity of this matrix. However, we note that $M_{ij}{}^{\ell m}$ is a matrix of the form
\begin{equation}
    \alpha I + \ket{v}\bra{w},
\end{equation}
where $I$ is the identity matrix, $\ket{v}$ is the vector of components of $\tilde{h}_{jk}^{(0)} / \sqrt{2}$, and $\bra{w}$ is the covector of components of $\tilde{h}^{\ell m}_{(1)}/\sqrt{2}.$ Matrices of this form have determinant given by the \emph{matrix determinant lemma}, a special case of Sylvester's determinant identity that states that any such matrix has determinant given by
\begin{equation}
    \det(\alpha I + \ket{v}\bra{w}) = \alpha^{n-1} (\alpha + \braket{w}{v}),
\end{equation}
where $n$ is the dimension of the vector space over which the matrix is defined. It follows that the determinant of $M_{jk}{}^{\ell m}$ is given by 
\begin{equation}
    \det(M_{jk}{}^{\ell m})
            = \left(\frac{d-2}{2}\right)^{\frac{d (d-1)}{2}} \left[ \frac{d-2}{2} + \frac{1}{2} \tilde{h}_{\ell m}^{(0)}\, \tilde{h}^{\ell m}_{(0)}\right],
\end{equation}
where the exponent $d (d-1)/2$ appears as the dimension of the space of symmetric tensors on a $(d-1)$-dimensional surface. Expanding the identity $\tilde{h}_{\ell m} \tilde{h}^{\ell m} = d-1$ perturbatively in $z$ yields the expression
\begin{equation}
    \tilde{h}_{\ell m}^{(0)}\, \tilde{h}^{\ell m}_{(0)} = d-1,
\end{equation}
and so the determinant of $M_{jk}{}^{\ell m}$ takes the form
\begin{equation}
    \det(M_{jk}{}^{\ell m})
            = \left(\frac{d-2}{2}\right)^{\frac{d (d-1)}{2}} \frac{2d-3}{2}.
\end{equation}
This determinant vanishes only for $d=2$ and $d=3/2,$ neither of which are satisfied by the spacetimes we consider. This implies that $M_{jk}{}^{\ell m}$ is invertible, and hence that $\tilde{h}_{jk}^{(1)}$ is uniquely determined to satisfy
\begin{equation}
    \tilde{h}_{jk}^{(1)} = 0
\end{equation}
whenever the AdS equation \eqref{eq:app-thm-falloff} is satisfied at order $1/z.$

With this base case established, we may now proceed to solve for $\tilde{h}_{jk}$ inductively at all orders. As an inductive assumption, we will assume that $\tilde{h}_{jk}^{(p)}$ is known for all $p < q$, and furthermore that $\tilde{h}_{jk}^{(p)}$ vanishes whenever $p < q$ is odd. This inductive assumption is satisfied by our base cases, in which $\tilde{h}_{jk}^{(0)}$ is specified by the choice of defining function and $\tilde{h}_{jk}^{(1)}$ was shown to vanish. At order $z^{q-2}$, equation \eqref{eq:obyo-ricci} takes the form
\begin{align} \label{eq:generic-Ricci-expansion}
    o(z^k)
            = z^{q-2} & \left\{ \frac{d-1-q}{2}\, q\, \tilde{h}_{jk}^{(q)}
                + \frac{1}{2} \sum_{\substack{A+B+C=q \\ C \geq 1}} C\, \tilde{h}_{jk}^{(A)}\, \tilde{h}^{\ell m}_{(B)}\,
                \tilde{h}_{\ell m}^{(C)} \right. \nonumber \\
                & \left. +\, \frac{1}{2}\,  \sum_{\substack{A+B+C=q \\ A,C \geq 1}} A\, C\, \left[ \tilde{h}_{\ell j}^{(A)}\, \tilde{h}^{\ell m}_{(B)}\, \tilde{h}_{m k}^{(C)}
                - \frac{1}{2} \tilde{h}_{jk}^{(A)}\, \tilde{h}^{\ell m}_{(B)}\, \tilde{h}_{\ell m}^{(C)} \right] + \tilde{R}_{jk}^{(q-2)} \right\},
\end{align}
where we have again expanded derivatives of the metric determinant according to \ref{eq:metric-det-der-expansion}. If $k$ is equal to $q-3$, then this equation cannot be used to determine $\tilde{h}_{jk}^{(q)}.$ As per the statement of Theorem \ref{thm:FG-app}, then, the induced metric is determined at most up to order $z^{k+2}$. Similarly, if $q$ is equal to $d-1$, then the first term on the right-hand side of \eqref{eq:generic-Ricci-expansion} vanishes and so $\tilde{h}_{jk}^{(q)}$ is undetermined by \eqref{eq:generic-Ricci-expansion} --- this is the source of the other claim in Theorem \ref{thm:FG-app}, that the metric is determined at most up to order $z^{d-2}$. If neither of these two failure conditions is met, however, we claim that \eqref{eq:generic-Ricci-expansion} can be used to solve for $\tilde{h}_{jk}^{(q)}.$

Collecting the terms proportional to $\tilde{h}_{jk}^{(q)}$ in \eqref{eq:generic-Ricci-expansion} yields the expression
\begin{align} \label{eq:obyo-solving}
    \frac{q}{2}\left[(d-1-q) \delta_{j}{}^{\ell} \delta_{k}{}^{m} + \tilde{h}_{jk}^{(0)} \tilde{h}^{\ell m}_{(0)} \right] \tilde{h}_{\ell m}^{(q)}
        = & -\frac{1}{2} \sum_{\substack{A+B+C=q \\ q-1 \geq C \geq 1}} C \tilde{h}_{jk}^{(A)} \tilde{h}^{\ell m}_{(B)}
                \tilde{h}_{\ell m}^{(C)} + \tilde{R}_{jk}^{(q-2)} \\
                & - \frac{1}{2}  \sum_{\substack{A+B+C=q \\ A,C \geq 1}} A C \left[ \tilde{h}_{\ell j}^{(A)} \tilde{h}^{\ell m}_{(B)} \tilde{h}_{m k}^{(C)}
                - \frac{1}{2} \tilde{h}_{jk}^{(A)} \tilde{h}^{\ell m}_{(B)} \tilde{h}_{\ell m}^{(C)} \right]. \nonumber
\end{align}
Terms of the form $\tilde{h}_{jk}^{(p)}$ appearing on the right-hand side of \eqref{eq:obyo-solving} are all of order $p < q$, and are assumed to be known by our inductive assumption. This expression constitutes a system of linear equations for $\tilde{h}_{jk}^{(q)},$ and by following exactly the same matrix determinant lemma argument given above, one can show that the matrix acting on $\tilde{h}_{\ell m}^{(q)}$ is invertible so long as $q < d-1$ is satisfied. It follows that $\tilde{h}_{jk}^{(q)}$ is determined by equation \eqref{eq:obyo-solving} under the inductive assumption that $\tilde{h}_{jk}^{(p)}$ is known for $p < q.$ Furthermore, when $q$ is odd, the terms in \eqref{eq:obyo-solving} involving sums over labels satisfying $A+B+C=q$ must vanish since at least one of $A$, $B$, or $C$ is odd, and we assumed that odd terms vanish for $p < q$. Similarly, one can check that the Ricci tensor of a metric with an even power series is itself an even power series, and so $\tilde{R}_{jk}^{(q-2)}$ vanishes for $q$ odd. The vanishing of the right-hand side of \eqref{eq:obyo-solving} implies that $\tilde{h}_{jk}^{(q)}$ vanishes when $q$ is odd, which validates our second inductive assumption.

We have thus proven that the $jk$ components of equation \eqref{eq:app-thm-falloff} uniquely determine $\tilde{h}_{jk}$ as an even power series in $z$ up to and including order $z^{d-2}$ or $z^{k+2},$ whichever comes first. In principle, one could go on to check that the solution we obtained by matching the $jk$ components of \eqref{eq:app-thm-falloff} is compatible with $zz$ and $zj$ components of the same equation; however, for our purposes, this is not actually necessary. We have assumed that we are given a spacetime metric that satisfies equation \eqref{eq:app-thm-falloff}, and asked how far into the bulk the induced metric on level sets of $z$ is determined by the boundary metric alone. This is a question of \emph{uniqueness}. Checking the compatibility of the $zz$ and $zj$ components with our answer would amount to the question of \emph{existence}: given an arbitrary metric on the boundary, does there exist a local extension into some ``bulk direction'' that is asymptotically AdS? This is an interesting question, but irrelevant in the context of Theorem \ref{thm:FG-app}. Discussions of the existence of such solutions can be found in the mathematics literature in \cite{FG1, FG2}.

\subsection{Comments on the Fefferman-Graham Falloff Condition}
\label{app:FG-falloff}

With Theorem \ref{thm:FG-app} proven in full generality, we now return to the Fefferman-Graham falloff condition \eqref{eq:app-FG-falloff}. As explained in Section \ref{sec:two-spacetimes}, spacetimes satisfying the Fefferman-Graham falloff have state-independent divergences in the holographic entanglement entropy; this makes them especially interesting in the context of holography. Here, we provide two comments on features of the Fefferman-Graham falloff in general relativity, both of which suggest that Fefferman-Graham falloff is fairly generic in ``physically reasonable'' spacetimes.

Our first observation is that the Fefferman-Graham falloff is exactly the falloff required for a family of local observers near the boundary to measure vanishing local energy. Using the general form of Einstein's equations with a (negative) cosmological constant,
\begin{equation}
    R_{ab} - \frac{1}{2} R g_{ab} + \Lambda g_{ab} = 8 \pi G_N T_{ab},
\end{equation}
one can show that the Ricci tensor always satisfies
\begin{equation}
    R_{ab} = - (d-1) g_{ab} - \frac{8 \pi G_N T}{d-2} g_{ab} + 8 \pi G_N T_{ab},
\end{equation}
where $T$ is the trace of the bulk stress-energy tensor. The Fefferman-Graham falloff condition is therefore satisfied if and only if the stress-energy tensor decays near the boundary as
\begin{equation} \label{eq:app-FG-energy-falloff}
    T_{ab} = o(z^{d-4}).
\end{equation}

The local energy measured by a family of timelike observers in a small spatial region $B$ is given by
\begin{equation} \label{eq:local-energy}
    \int_B T_{ab}\, u^a\, u^b\, \bm{\alpha}_{c_1 \dots c_{d-1}},
\end{equation}
where $u^a$ is the unit tangent vector describing the family of timelike observers and $\bm{\alpha}$ is the volume form on $B$. Since the metric scales as $1/z^2$, any normalized\footnote{Here we call a tensor ``normalized'' if its contraction with its dual is order one in $z$. For example, a tensor $P^{a}{}_{bc}$ is normalized if it satisfies $P^{a}{}_{bc} P_{a}{}^{bc} = O(1).$} tensor with $m$ up-indices and $n$-down indices scales as $z^{m-n}.$ In particular, the unit timelike vector field $u^a$ scales as $z$ and the volume form $\bm{\alpha}$ scales as $z^{-(d-1)}.$ These two scalings together imply that the local energy measured by a family of observers, \eqref{eq:local-energy}, diverges logarithmically in $z$ exactly when $T_{ab}$ scales as $z^{d-4}.$ For $T_{ab} = o(z^{d-4}),$ the Fefferman-Graham scaling, \eqref{eq:local-energy} vanishes near the boundary.\footnote{Here we are implicitly assuming that $T_{ab}$ is smooth in $z$ at $z=0$; otherwise, the  condition $T_{ab} = o(z^{d-4})$ would allow solutions like $T_{ab} \sim z^{d-4} / \ln{z}$, for which equation \eqref{eq:local-energy} still diverges near the boundary.}

We also note that it is generically rather difficult to engineer matter configurations that violate equation \eqref{eq:app-FG-energy-falloff}. Normalizable Klein-Gordon fields satisfying the scalar wave equation
\begin{equation}
    \Del_a \Del^a \phi - m^2 \phi^2 = 0,
\end{equation}
for example, scale as
\begin{equation} \label{eq:phi-scaling}
    \phi \sim z^{d-1}
\end{equation}
in any asymptotically anti-de Sitter spacetime whose boundary conditions allow $\phi$ to be expanded smoothly in $z$.\footnote{Equation \eqref{eq:phi-scaling} can be verified by writing the scalar wave equation in a Fefferman-Graham system of coordinates and solving it order-by-order in $z$ up to order $z^{d-1}$, much in the same way that we solved for the induced metric using the AdS equation in subsection \ref{app:FG-proof}.}\textsuperscript{,}\footnote{Classical scalar fields are not necessarily smooth in $z$ for arbitrary well-posed AdS boundary conditions. For a detailed analysis of this issue, see Section 3.2 of \cite{IW2004}.} The stress-energy tensor of a classical Klein-Gordon field scales as the derivative of the field squared, implying that the stress-energy scales as
\begin{equation}
    T_{ab} \sim z^{2(d-2)}.
\end{equation}
This is \emph{significantly} faster than the $z^{d-4}$ Fefferman-Graham falloff, exceeding it by a factor of $z^d.$ One could in principle perform a similar analysis using electromagnetic fields, Dirac fields, or any other classical field of interest. In general, though, it seems that rather unusual matter configurations are required to produce stress-energy tensors that violate the Fefferman-Graham falloff. For examples of matter configurations motivated by the AdS/CFT correspondence that \emph{do} violate the Fefferman-Graham falloff condition, see \cite{MW}.

\section{Covariant Analysis of Extremal Surfaces}
\label{app:extremal-surfaces}

Perhaps the most common characterization of an extremal surface in the physics literature is as a local saddle point of the area action. For a hypersurface $\Sigma$, one may write the induced metric $h_{j k}$ on $\Sigma$ in a particular set of internal coordinates, represent the area of $\Sigma$ formally as
\begin{equation}
    \mathrm{Area} = \int \sqrt{|h|}\,\, \dd x^{j},
\end{equation}
and call the surface \emph{extremal} if the Euler-Lagrange equations for $\sqrt{h}(x^j)$ are satisfied. This formulation of the extremality criterion has the advantage of being straightforward to compute for surfaces that can be easily parameterized, but the disadvantages of (i) not being explicitly covariant and (ii) losing the rich structure of the spacetime surrounding $\Sigma$ by writing everything in terms of the ``induced'' geometry on $\Sigma$ itself.

In this appendix, we present a more covariant, geometrically motivated approach to the study of extremal surfaces. In particular, we provide four equivalent, covariant ways to characterize extremality, each of which adds a different tool to our kit for analyzing extremal surfaces. We first provide a general definition of extremal surfaces in terms of smooth deformations of submanifolds, then provide two computationally useful characterizations of extremality in the language of differential forms, and finally relate our constructions to the vanishing trace of the extrinsic curvature. While some of these characterizations may be original --- namely, those involving differential forms, which we have not seen presented elsewhere in the form they take in this appendix --- it is likely that all results in this appendix may be found elsewhere in the literature. We present them here not as new results, but as a coherent ``manual'' for readers seeking to undertake their own covariant analyses of extremal surfaces.

\subsection{Defining Extremality}

Rather than defining a surface as ``extremal'' if it satisfies particular equations of motion, let us begin with the most \emph{intuitive} characterization of extremality: that a surface is extremal if its area does not increase at linear order under any compact deformation. In a Lagrangian framework, this is of course equivalent to the surface satisfying the equations of motion of the area action; however, we will see that beginning with a general framework of deformations will be useful in defining geometric tools for extremal surface analysis.

Consider a codimension-$k$ hypersurface $\Sigma_0$ in a $d$-dimensional spacetime. A \emph{one-parameter deformation} of $\Sigma_0$ is a smooth family of surfaces $\Sigma_{\lambda}$ that coincides with $\Sigma_0$ at $\lambda = 0$.\footnote{We will generally restrict our consideration to foliations with locally integrable normal bundles, and assume that all surfaces in question admit such foliations. This assumption makes our proofs simpler, but is not actually necessary for proving the main results of the paper; we will comment further on these subtleties as they arise. For a detailed analysis of the properties of submanifolds that do \emph{not} admit such foliations, see Section 3 of \cite{speranza2019}.} A \emph{compact one-parameter deformation} is a deformation that only affects some compact subregion of $\Sigma_0$; formally, we require that all surfaces $\Sigma_{\lambda}$ intersect on a subregion of $\Sigma_0$ whose complement is compact. Since the deformation is compact, the difference in area between $\Sigma_0$ and $\Sigma_{\lambda}$ is finite --- we may assign this finite difference to a function
\begin{equation}
    \Delta \mathrm{Area} (\lambda) \equiv \mathrm{Area}(\Sigma_{\lambda}) - \mathrm{Area}(\Sigma_0)
\end{equation}
As a basic definition of extremality, we propose the following:
\begin{definition}[One-Parameter Extremality]
    A smooth hypersurface $\Sigma_0$ is \emph{extremal} if any compact one-parameter deformation of
    $\Sigma_0$ satisfies
    \begin{equation}
        \Delta \mathrm{Area} (\lambda) = O(\lambda^2),
    \end{equation}
    i.e. if the linear contribution to the change in area vanishes.
    \label{def:1paramext}
\end{definition}

More generally, a codimension-$k$ hypersurface admits a $k$-parameter family of deformations $\Sigma_{\lambda_1, \dots, \lambda_k}$. We may equivalently adapt the above definition with these deformations in mind:
\begin{definition}[$k$-Parameter Extremality]
    A smooth, codimension-$k$ hypersurface $\Sigma_0$ is \emph{extremal} if any compact $k$-parameter deformation of $\Sigma_0$ satisfies
    \begin{equation}
        \Delta \mathrm{Area} (\lambda_1, \dots, \lambda_k) = O(\lambda_i \lambda_j)|_{i, j \in \{1, \dots, k\}}.
    \end{equation}
    \label{def:kparamext}
\end{definition}

Definition \ref{def:1paramext} trivially implies Definition \ref{def:kparamext}, since any $k$-parameter family of deformations can be reduced to $k$ one-parameter families by fixing all but one of the parameters $\lambda_i$ to zero. Definition \ref{def:kparamext} implies Definition \ref{def:1paramext} since any one-parameter deformation of a codimension-$k$ surface can be extended to a $k$-parameter deformation by picking $k-1$ directions normal to $\Sigma_0$ that form a linearly independent set with the original deformation vector $(\partial / \partial \lambda)^a$ and deforming along their geodesics.

This ``deformation'' characterization lends itself naturally to a characterization of extremal surfaces in the language of differential forms, which is the subject of the following subsection.

\subsection{Induced Volumes and $k$-Normal Forms}

Let us consider exactly \emph{how} the area of a codimension-$k$ surface changes along a $k$-parameter deformation. A codimension-$k$ surface in a $d$-dimensional spacetime comes equipped with an \emph{induced volume form} $\bm{\alpha}_{a_1 \dots a_{d-k}}$. The tangent bundle of such a surface is $(d-k)$-dimensional, and so it admits a unique $(d-k)$-form up to orientation and normalization. The induced volume form is specified by choosing an (arbitrary) orientation and requiring the normalization
\begin{equation} \label{eq:induced-vol-form}
    \bm{\alpha}_{a_1 \dots a_{d-k}} \bm{\alpha}^{a_1 \dots a_{d-k}} = (-1)^s\, (d-k)!,
\end{equation}
where $s$ is the number of minus signs in the signature of the induced metric on the tangent bundle.

A $k$-parameter deformation of a codimension-$k$ hypersurface $\Sigma_0$ defines a field of induced volume forms in a neighborhood of the surface (or rather, in a neighborhood of the subregion of $\Sigma_0$ that is not fixed by the deformation). The change in area between $\Sigma_0$ and another member of the deformation is given by
\begin{equation}
    \Delta \mathrm{Area} (\lambda_1, \dots, \lambda_k)
        = \int_{\Sigma_{\lambda_1, \dots, \lambda_k}} \bm{\alpha} - \int_{\Sigma_0} \bm{\alpha}.
\end{equation}
By Stokes' theorem, this difference can be computed by integrating the exterior derivative $\dd \bm{\alpha}$ over a spacetime region whose boundary is $\bar{\Sigma}_0 \cup \bar{\Sigma}_{\lambda_1, \dots \lambda_k}$, where here the bars over the surfaces indicate that we are considering only the compact subregions where the surfaces differ. If we label this region $S$, then the change in area is given by
\begin{equation} \label{eq:k-param-area-var}
    \Delta \mathrm{Area} (\lambda_1, \dots, \lambda_k)
        = \int_{S} \dd \bm{\alpha}.
\end{equation}
Since $\Sigma_{\lambda_1, \dots \lambda_k}$ and $\Sigma_0$ differ on only a compact set, $S$ itself can be chosen to be compact. Its volume is given by some function of the parameters $\lambda_i$ that vanishes in the limit $\lambda_i \rightarrow 0$. It follows that any $O(\lambda_i)$ changes in the form $\dd \bm{\alpha}$ contribute to the integral in \eqref{eq:k-param-area-var} above linear order; the linear contributions to \eqref{eq:k-param-area-var}, therefore, are proportional to terms of the form
\begin{equation}
    \int_{\bar{\Sigma}_0} n \cdot \dd \bm{\alpha},
\end{equation}
where $n$ is a vector that is normal to $\Sigma_0.$ But if the surface is extremal, then this must vanish! The integral of $\dd \bm{\alpha}$ over $\bar{\Sigma}_0$ vanishes if and only if (i) $\dd \bm{\alpha}$ vanishes, or if (ii) $\dd \bm{\alpha}$ has two components normal to $\Sigma_0.$ The second condition is forbidden by our assumption that directions normal to $\Sigma_0$ are locally integrable.\footnote{This claim follows from Lemma \ref{lem:div-knormal-normal}, which we will prove shortly. When the normal bundle is integrable, the Hodge dual of $\dd \bm{\alpha}$ is entirely normal to $\Sigma_0$ --- this implies that $\dd \bm{\alpha}$ has at most one component normal to $\Sigma_0$.} We conclude that a necessary and sufficient condition for extremality is that the induced volume form $\bm{\alpha}$ satisfies
\begin{equation} \label{eq:ext-der-extrem}
    \dd \bm{\alpha}|_{\Sigma_0} = 0
\end{equation}
for any $k$-parameter deformation of $\Sigma_0.$ This gives our second characterization of extremality, and our first in terms of differential forms.\footnote{In the mathematics literature, a closed differential form on an extremal surface is sometimes called a \emph{calibration}. Calibrations were applied to the study of holographic entanglement entropy in static spacetimes in \cite{BDGCY}.}

The induced volume form on $\Sigma_0$, however, can be unwieldy to use in computations. Its exterior derivative, in particular, is \emph{a priori} dependent on the choice of deformation, and can be annoying to compute. We shall find it significantly more tractable from a computational perspective to reframe equation \eqref{eq:ext-der-extrem} in terms of the \emph{unit $k$-normal form}.

For any codimension-$k$ surface $\Sigma_0,$ we may define the \emph{normal bundle} $\mathrm{N}(\Sigma_0)$ as the collection of all tangent directions in the spacetime manifold that are normal to $\Sigma_0.$ For simplicity, we assume in the following that $\Sigma_0$ is not a null surface, i.e., that no vectors are simultaneously tangent and normal to $\Sigma_0$. The normal bundle of a codimension-$k$ surface is $k$-dimensional, and hence admits a unique $k$-form up to orientation and normalization. Fixing an arbitrary orientation, we define the \emph{unit $k$-normal form} as the unique $k$-form $N_{a_1 \dots a_k}$ satisfying
\begin{equation} \label{eq:k-normal-form}
    N_{a_1 \dots a_k} N^{a_1 \dots a_k} = (-1)^s\, k!,
\end{equation}
where $s$ is the number of minus signs in the signature of the normal bundle.\footnote{Note that for a codimension-$1$ surface, $\mathbf{N}$ is just the unit normal vector.} The normalization of $\mathbf{N}$ is chosen so that the induced volume form on $\Sigma_0$ is given by the \emph{Hodge dual} of $\mathbf{N}$, i.e.
\begin{equation}
    \bm{\alpha}_{a_1 \dots a_{d-k}}
        = \frac{1}{k!}\, \epsilon_{a_1 \dots a_{d-k} a_{d-k+1} \dots a_{d}}
            N^{a_{d-k+1} \dots a_{d}},
\end{equation}
where $\bm{\epsilon}$ is the volume form on the full $d$-dimensional spacetime. Equivalently, comparing equation \eqref{eq:k-normal-form} to equation \eqref{eq:induced-vol-form}, we can see that the unit $k$-normal form is just the induced volume form on integral submanifolds of the normal bundle $\mathrm{N}(\Sigma_0).$ Since $\mathbf{N}$ is a $k$-form on the normal bundle, it must be proportional to any other $k$-form on the normal bundle, with the constant of proportionality fixed by normalization and orientation. In particular, it can always be written in the form
\begin{equation} \label{eq:kform-basis}
    N_{a_1 \dots a_k}
        = \pm\, k!\, n^{(1)}{}_{[a_1} n^{(2)}{}_{a_2} \dots n^{(k)}{}_{a_k]}
\end{equation}
for any basis $\{n_{(i)}{}^a\}$ of the normal bundle satisfying the following conditions:
\begin{enumerate}[(i)]
    \item If $n_{(i)}{}^a$ is spacelike or timelike, then it satisfies $n_{(i)}{}^{a} n^{(j)}{}_{a} = \pm \delta_{ij}.$ (I.e., it is orthogonal to every other vector in the basis, and is normalized to $\pm 1$.)
    \item If two \emph{distinct} basis vectors $n_{(i)}{}^a$ and $n_{(j)}{}^a$ are both null, then they satisfy $n_{(i)}{}^{a} n^{(j)}{}_{a} = \pm 1.$
\end{enumerate}
The sign in equation \eqref{eq:kform-basis} depends on the chosen orientation of $\Sigma_0$ relative to the orientation of $\{n_{(i)}^{a}\}$; it can be reversed by switching the order of any two vectors in the basis.

Having identified the induced volume form $\bm{\alpha}$ as the Hodge dual ${\ast}\,\mathbf{N}$, we may use the \emph{divergence identity}
\begin{equation}
    {\ast}\,\dd \bm{\alpha} = {\ast}\,\dd\,{\ast}\,\mathbf{N} = \Del^b N_{b a_2 \dots a_k}.
\end{equation}
Vanishing of $\dd \bm{\alpha}$, as per \eqref{eq:ext-der-extrem}, is therefore equivalent to the vanishing of the divergence of the $k$-normal form. This gives us a more computationally useful characterization of extremality: a codimension-$k$ surface $\Sigma_0$ is extremal if and only if its unit $k$-normal form satisfies\footnote{If one relaxes the assumption that the normal bundle of $\Sigma_0$ is locally integrable, and thus allows $\dd \bm{\alpha}$ to have nonvanishing terms with two components normal to $\Sigma_0$, then one can only guarantee that the normal components of \eqref{eq:div-norm-extrem} vanish.}
\begin{equation} \label{eq:div-norm-extrem}
    \nabla_b N^{b a_2 \dots a_k}|_{\Sigma_0} = 0
\end{equation}
for any deformation of $\Sigma_0$. This expression has a major advantage over \eqref{eq:ext-der-extrem} in that the divergence of a $k$-form is much more straightforward to compute than the exterior derivative of a $(d-k)$-form. Equation \eqref{eq:div-norm-extrem} is our third characterization of extremality, and the one with the greatest computational power. In fact, it is the condition used in the main body of the paper, in Section \ref{sec:main-sec}, to prove cutoff-covariance of the holographic entanglement entropy.

Still, we are left with a bit of a puzzle from these two characterizations of extremality in terms of differential forms. Namely, both seem at least naively to depend on a choice of deformation --- the exterior derivative of $\bm{\alpha}$, and likewise the divergence of $\mathbf{N}$, cannot be computed without defining these forms in a neighborhood of $\Sigma_0$. Checking extremality by checking equations \eqref{eq:ext-der-extrem} or \eqref{eq:div-norm-extrem} would seem to require checking them for \emph{all possible deformations} of $\Sigma_0.$ Not only does this seem prohibitively difficult in practical applications, it also seems at odds with our intuition that any characterization of extremality should be a local property of a surface's embedding in spacetime, not a property of its deformations.

We address this puzzle in the following subsection, in which we show that the quantity $\Del_a N^{a \dots}|_{\Sigma_0}$ can be written in terms of the trace of the extrinsic curvature of $\Sigma_0$, which manifestly depends only on the local geometry of $\Sigma_0$ itself. This implies that $\Del_a N^{a \dots}|_{\Sigma_0}$ is an invariant of the surface, and is thus independent of the choice of deformation. This connection between our ``differential forms picture'' of extremality and the more familiar condition of ``vanishing trace of the extrinsic curvature'' constitutes our fourth and final characterization of extremality.

\subsection{Extrinsic Curvature}

Formally, the extrinsic curvature tensor of $\Sigma_0$ at a point $p$ is a map from two copies of the tangent space at $p$ to a single copy of the normal space at $p$. It is a tensor $K^c{}_{ab}$ with two down-indices that belong to the cotangent bundle of the surface $\Sigma_0$ and one up-index that belongs to the normal bundle, defined so that for any two tangent vectors $X^a, Y^a,$ we have
\begin{equation} \label{eq:extrinsic-curvature}
    K^c{}_{ab} X^a Y^b = P^{c}{}_{b} X^{a} \Del_a Y^b,
\end{equation}
where $P^{a}{}_{b}$ is the orthogonal projector from the full spacetime tangent bundle down onto the hypersurface normal bundle $\mathrm{N(\Sigma_0)}.$\footnote{In the mathematics literature, the extrinsic curvature is frequently written as $(\Del_X Y)^{\perp}$ to emphasize that it gives the normal component of the directional derivative of $Y^a$ with respect to $X^a.$} (Equivalently, $P_{ab}$ is the induced metric on $\mathrm{N}(\Sigma_0)$.) Even though a covariant derivative appears in the right-hand side of the above expression, it can be evaluated without needing to specify $Y^a$ off of the surface, since the full expression takes the form of a directional derivative along another tangent direction $X^a$. We may exploit the product rule to rewrite \eqref{eq:extrinsic-curvature} in the manifestly symmetric form
\begin{equation} \label{eq:extrinsic-curvature-sym}
    K^c{}_{ab} X^a Y^b = P^{c}{}_{b} \Del_a (X^{a} Y^b),
\end{equation}
where the other product-rule term, $P^{c}{}_{b} Y^b \Del_a X^a$, vanishes since $Y^a$ tangent to $\Sigma_0$ implies $P^{c}{}_{b} Y^b = 0.$ By taking the trace of the extrinsic curvature over the two indices that belong to the cotangent bundle of $\Sigma_0$, $K^c{}_{ab}$ gives rise to a preferred vector on $\mathrm{N}(\Sigma_0)$ as
\begin{equation}
    K^{c}{}_{ab} \rightarrow K^{c}{}_{ab} h^{ab},
\end{equation}
where $h_{ab}$ is the induced metric on $\Sigma_0.$

This vector, $K^{c}{}_{ab} h^{ab}$, is intimately related to the divergence of the unit $k$-normal form: in fact, for any deformation of $\Sigma_0$, the divergence of \textbf{N} is proportional to \emph{the Hodge dual of $K^{c}{}_{ab} h^{ab}$ with respect to the induced metric on the normal bundle.} In other words, knowing the trace of the extrinsic curvature $K^{c}{}_{ab} h^{ab}$ on $\Sigma_0$ suffices to specify the quantity $\Del_a N^{a\dots}|_{\Sigma_0}$ for an arbitrary deformation, and vice versa. To show this, we first must verify the following lemma: that the divergence of $\mathbf{N}$ is a $(k-1)$-form on the normal bundle of $\Sigma_0.$

\begin{lemma} \label{lem:div-knormal-normal}
    For any $k$-parameter deformation $\Sigma_{\lambda_1, \dots \lambda_k}$ of a codimension-$k$ surface $\Sigma_0$, the divergence of the unit $k$-normal form,
    \begin{equation}
        \Del_a\, N^{a \dots},
    \end{equation}
    has support only along directions normal to the level sets of the deformation. In particular, its restriction to the undeformed surface,
    \begin{equation}
        \Del_a\, N^{a \dots}|_{\Sigma_0},
    \end{equation}
    is supported in the normal bundle $\mathrm{N}(\Sigma_0).$
\end{lemma}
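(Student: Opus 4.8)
The plan is to reduce the statement to an analysis of the exterior derivative of the induced volume form. Using the divergence identity $\Del^b N_{b a_2 \dots a_k} = \ast\, d\bm{\alpha}$ established just above, together with the fact that $\bm{\alpha} = \ast\, \mathbf{N}$ is the volume form on the leaves of the deformation, the content of the lemma becomes the claim that $d\bm{\alpha}$ is, up to a single factor drawn from the normal coframe, a top form on the tangent distribution of the leaves. To set this up I would introduce an adapted orthonormal coframe in the neighborhood of $\Sigma_0$ that the deformation foliates: covectors $\theta^1, \dots, \theta^{d-k}$ spanning the cotangent directions of the leaves and $\nu^1, \dots, \nu^k$ spanning the normal directions, normalized so that $\bm{\alpha} = \theta^1 \wedge \cdots \wedge \theta^{d-k}$. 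Since this coframe lives on an open set, the argument holds pointwise throughout the neighborhood; specializing to the leaf $\Sigma_0$ then yields both the general statement and its restriction.

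Next I would expand $d\bm{\alpha} = \sum_i (-1)^{i-1}\, \theta^1 \wedge \cdots \wedge d\theta^i \wedge \cdots \wedge \theta^{d-k}$ and decompose each two-form $d\theta^i$ in the coframe basis into a purely tangent part ($\theta \wedge \theta$), a mixed part ($\theta \wedge \nu$), and a purely normal part ($\nu \wedge \nu$). Wedging against the remaining $d-k-1$ tangent legs, the purely tangent part drops out by a pigeonhole count, since it would require $d-k+1$ tangent legs while only $d-k$ distinct tangent directions exist; the mixed part survives only through its $\theta^i \wedge \nu^\mu$ component and contributes a term equal to the full tangent volume wedged with a single normal leg; and the purely normal part contributes a term with $d-k-1$ tangent legs and two normal legs. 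The decisive point is that this last, two-normal-leg, contribution is governed by $d\theta^i(n_\mu, n_\nu) = -\theta^i([n_\mu, n_\nu])$, i.e.\ by the tangential part of the Lie bracket of two normal frame fields. This is exactly the obstruction to integrability of the normal bundle, so it vanishes under our standing assumption that the normal bundle is locally integrable. Hence $d\bm{\alpha}$ is proportional to $\theta^1 \wedge \cdots \wedge \theta^{d-k} \wedge \nu^\mu$, carrying precisely one normal leg.

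Finally I would Hodge dualize. The dual of $\theta^1 \wedge \cdots \wedge \theta^{d-k} \wedge \nu^\mu$ is a $(k-1)$-form built from the remaining normal covectors alone, so $\Del^b N_{b a_2 \dots a_k} = \ast\, d\bm{\alpha}$ is supported entirely in $\bigwedge^{k-1} \mathrm{N}(\Sigma_0)^\ast$, which is the assertion of the lemma. I expect the main obstacle to be the careful bookkeeping in the coframe expansion, and in particular verifying under the Hodge dual that it is exactly the two-normal-leg term of $d\bm{\alpha}$ that would otherwise produce a tangential component of the divergence; this makes integrability of the normal bundle both necessary and sufficient to eliminate such a component, matching the caveat noted earlier in the text. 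The remaining details, namely signs, signature factors, and the $\ast\,\ast = \pm 1$ normalization, are routine and do not affect the conclusion.
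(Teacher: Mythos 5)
Your proof is correct, and it takes a genuinely different route from the paper's. The paper works directly on the divergence itself: it contracts an arbitrary vector $T^a$ tangent to the leaves into a free index of $\Del_a N^{a\dots}$, expands $\mathbf{N}$ as an antisymmetrized product of a normal basis $\{n_{(i)}{}^a\}$, applies the product rule, and observes that every surviving term is proportional to $T_b(\mathscr{L}_{n_{(i)}}n_{(j)})^b$, which Frobenius' theorem kills under the integrability assumption; antisymmetry of the divergence then extends the conclusion to all free indices. You instead pass to the dual picture via $\Del^b N_{b a_2\dots a_k} = \ast\,\dd\bm{\alpha}$, expand $\dd\bm{\alpha}$ in an adapted coframe, sort the components of each $\dd\theta^i$ by the number of normal legs, and use Cartan's formula $\dd\theta^i(n_\mu,n_\nu) = -\theta^i([n_\mu,n_\nu])$ so that Frobenius eliminates the two-normal-leg piece, leaving $\dd\bm{\alpha}$ with exactly one normal leg; dualizing gives the lemma. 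The two arguments hinge on the same integrability input, but your version buys something the paper's does not make explicit: it cleanly isolates the tangential component of the divergence as the Hodge dual of the two-normal-leg part of $\dd\bm{\alpha}$, showing that integrability is not just sufficient but (pointwise) necessary for that component to vanish --- since the basis $(d-k+1)$-forms appearing there are linearly independent, the obstruction vanishes if and only if every $\theta^i([n_\mu,n_\nu])$ does. This directly substantiates the paper's footnote remark that, absent integrability, only the normal components of the divergence statement survive, and it meshes naturally with the $\dd\bm{\alpha}|_{\Sigma_0}=0$ characterization of extremality used elsewhere in the appendix. The paper's route is, by contrast, more economical: it never invokes the $\ast\,\dd\,\ast$ identity or a coframe, only the product rule on vector fields. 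The details you defer (signs, signature factors, $\ast\ast = \pm 1$) are indeed routine for non-null $\Sigma_0$, which is the standing assumption.
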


\begin{proof}
Let $T^a$ be an arbitrary vector field tangent to the level sets of the deformation and contract $T^a$ into the first free index of the divergence. Performing this contraction and writing $\mathbf{N}$ in terms of a local basis as in \eqref{eq:kform-basis} yields the expression
\begin{equation} \label{eq:tangent-comp}
    T_{b}\, \Del_{a}\, N^{ab \dots}
        = k!\, T_{b}\, \Del_{a}\, n_{(1)}{}^{[a} n_{(2)}{}^{b} \dots n_{(k)}{}^{a_k]}.
\end{equation}
The derivative in the right-hand side of this expression can be expanded according to the product rule. Since each of the basis vectors $n_{(i)}{}^a$ is orthogonal to $T^a$, this expansion vanishes whenever the up-index $b$ ends up outside of the derivative. By carrying out the antisymmetrization in the up-indices, the remaining terms can be grouped so that each one is proportional to an expression of the form
\begin{equation} \label{eq:frob-commutator}
    T_b(n_{(i)}{}^{a} \Del_a n_{(j)}{}^{b} - n_{(j)}{}^{a} \Del_a n_{(i)}{}^{b})
        \equiv T_b (\mathscr{L}_{n_{(i)}} n_{(j)})^b,
\end{equation}
where $\mathscr{L}_{X} Y$ denotes the Lie derivative of $Y$ with respect to $X$, or, equivalently, the vector commutator $[X, Y]$. These terms must all vanish as a consequence of \emph{Frobenius' theorem}, which states that a smooth specification of subspaces $\mathcal{V}$ is integrable if and only if $\mathscr{L}_X Y \in \mathcal{V}$ holds for any $X, Y \in \mathcal{V}$. The normal subspaces of the deformation are all integrable by assumption, and so the commutator of any two normal vectors is itself a normal vector. It follows that terms of the form \eqref{eq:frob-commutator} vanish, and hence that the entire expression \eqref{eq:tangent-comp} vanishes. Since the divergence of $\mathbf{N}$ is antisymmetric in its $(k-1)$ free indices, it vanishes whenever a vector tangent to the level sets of the deformation is contracted into \emph{any} of its free indices. It follows that
\begin{equation}
    \Del_{a}\, N^{a \dots}
\end{equation}
has no support along the tangent directions of the level sets of the deformation, and hence that its restriction to $\Sigma_0$ is supported in the normal bundle $\mathrm{N}(\Sigma_0),$ as desired.
\end{proof}

\vspace{0.4cm}

We have now shown that the divergence of $\mathbf{N}$ is a $(k-1)$-form on the normal bundle $\mathrm{N}(\Sigma_0).$ The trace of the extrinsic curvature, by contrast, is a vector field on the normal bundle --- or, equivalently, a one-form. On a $k$-dimensional subspace such as the normal bundle, there is a natural duality between one-forms and $(k-1)$-forms provided by the Hodge dual. A one-form $\bm{\omega}$ can be mapped to a $(k-1)$ form as\footnote{The Hodge dual is usually defined on a $d$-dimensional geometry by mapping an $m$-form to a $(d-m)$-form through contraction with the $d$-index volume form on that geometry. Since $\mathbf{N}$ \emph{is} the $k$-index volume form on the normal bundle (cf. equation \eqref{eq:k-normal-form}), this is the same definition as given in \eqref{eq:hodge-dual-1} and \eqref{eq:hodge-dual-2}.}
\begin{equation} \label{eq:hodge-dual-1}
    (\ast\, \bm{\omega})_{a_1 \dots a_{k-1}} = N_{a_1 \dots a_k} \bm{\omega}^{a_k},
\end{equation}
and a $(k-1)$-form $\bm{\tau}$ can be mapped to a one-form as
\begin{equation} \label{eq:hodge-dual-2}
    (\ast\, \bm{\tau})_{a_1} = \frac{1}{(k-1)!} N_{a_1 \dots a_k} \bm{\tau}^{a_2 \dots a_k}.
\end{equation}
We claim that this is exactly the relationship between the divergence of $\mathbf{N}$ and the trace of the extrinsic curvature --- that up to multiplication by a constant, the trace of the extrinsic curvature is the Hodge dual of the divergence of $\mathbf{N}.$ This is made precise in the following theorem.

\begin{theorem} \label{thm:kform-hodge}
    Let $\Sigma_{\lambda_1, \dots, \lambda_k}$ be a $k$-parameter deformation of a codimension-$k$ surface $\Sigma_0$. Then the divergence of the unit $k$-normal form $\mathbf{N}$ and the extrinsic curvature of $\Sigma_0$ are related on $\Sigma_0$ by
    \begin{equation} \label{eq:ext-curv-trace}
        K_{cab} h^{ab} 
            = - (-1)^s\, (k-1)!\, N_{c a_2 \dots a_k} \Del_{b} N^{b a_2 \dots a_k},
    \end{equation}
    where $s$ is the number of minus signs in the signature of the normal bundle and $h_{ab}$ is the induced metric on $\Sigma_0.$
\end{theorem}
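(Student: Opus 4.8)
The plan is to prove the identity without ever fixing a particular deformation, by routing both sides through the orthogonal projector onto the normal bundle; this also makes manifest that $\nabla_b N^{b\dots}|_{\Sigma_0}$ depends only on the local embedding. Write $P_{ab}$ for the normal projector (the induced metric on $\mathrm{N}(\Sigma_0)$) and $h_{ab} = g_{ab} - P_{ab}$ for the tangent projector. First I would establish that the mean curvature vector is the normal part of the divergence of the tangent projector. Using the symmetric form of the extrinsic curvature in \eqref{eq:extrinsic-curvature-sym} together with an orthonormal tangent frame $\{e_{(\alpha)}{}^a\}$, a product-rule rearrangement and the identity $P_{cb}\,e_{(\alpha)}{}^b = 0$ collapse the trace to
\begin{equation}
    K_{cab}\,h^{ab} = P_{cb}\,\nabla_a h^{ab} = -\,P_{cb}\,\nabla_a P^{ab}.
\end{equation}
This expression is already deformation-independent and reduces the theorem to an identity for $\nabla_a P^{ab}$.

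Next I would build an algebraic bridge between $\mathbf{N}$ and $P$. Expanding $\mathbf{N}$ in an orthonormal normal frame using \eqref{eq:kform-basis} and contracting all but one index, the orthonormality conditions give the purely algebraic relation
\begin{equation}
    N^{a c_2 \dots c_k}\,N_{b\,c_2 \dots c_k} = (-1)^s\,(k-1)!\,P^{a}{}_{b},
\end{equation}
whose constant I would fix by tracing: the left-hand side is $N_{a_1\dots a_k}N^{a_1\dots a_k} = (-1)^s\,k!$ by \eqref{eq:k-normal-form}, while $P^{a}{}_{a} = k$. Inverting gives $P^{ab} = \tfrac{(-1)^s}{(k-1)!}\,N^{a c_2\dots c_k}N^{b}{}_{c_2\dots c_k}$. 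I would then differentiate, contract with $P_{cb}$, and use that $P$ acts as the identity on the (entirely normal) indices of $\mathbf{N}$, i.e. $P_{cb}\,N^{b}{}_{c_2\dots c_k} = N_{c\,c_2\dots c_k}$. One resulting term is $\tfrac{(-1)^s}{(k-1)!}\,N_{c\,c_2\dots c_k}\,\nabla_a N^{a c_2\dots c_k}$, which is exactly $(-1)^s$ times the Hodge dual \eqref{eq:hodge-dual-2} of the divergence of $\mathbf{N}$; a second ``cross'' term of the form $P_{cb}\,N^{a c_2\dots c_k}\,\nabla_a N^{b}{}_{c_2\dots c_k}$ also appears.

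The hard part will be showing that this cross term vanishes on $\Sigma_0$. I would attack it by the same mechanism used in Lemma \ref{lem:div-knormal-normal}: expanding $\nabla_a N^{b}{}_{\dots}$ in the normal frame, the constancy of each inner product $n_{(i)}\cdot n_{(i)}$ eliminates the terms in which a leg is differentiated along itself, and the antisymmetry of $\mathbf{N}$ combined with Frobenius integrability of the normal bundle forces the remaining commutator-type pieces to cancel in pairs, using $n_{(i)}{}_a\,\nabla_b n_{(j)}{}^a = -\,n_{(j)}{}_a\,\nabla_b n_{(i)}{}^a$. I have verified this cancellation explicitly in the codimension-two case of primary interest.

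With the cross term gone, the computation lands on
\begin{equation}
    K_{cab}\,h^{ab} = -\,\frac{(-1)^s}{(k-1)!}\,N_{c a_2 \dots a_k}\,\nabla_b N^{b a_2 \dots a_k},
\end{equation}
i.e. the trace of the extrinsic curvature is $-(-1)^s$ times the Hodge dual of $\mathrm{div}\,\mathbf{N}$, consistent with the normalization fixed in \eqref{eq:hodge-dual-2}. The two obstacles I anticipate are thus (a) the frame bookkeeping needed to push the cross-term cancellation through in arbitrary codimension, and (b) tracking the combinatorial normalization; I note that this prefactor reduces to the $(k-1)!$ displayed in \eqref{eq:ext-curv-trace} exactly in the codimension-one and codimension-two cases actually used in the body of the paper.
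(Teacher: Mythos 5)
Your proposal follows essentially the same route as the paper's own proof: collapse the trace of the extrinsic curvature to $-P_{cb}\nabla_a P^{ab}$, express the normal projector algebraically in terms of $\mathbf{N}$, expand by the product rule, and kill the resulting ``cross'' term. The one place you deviate is in the normalization, and there you should not treat the mismatch as an obstacle: your prefactor is the correct one, and the paper's is not for $k \geq 3$. The paper's proof rests on its identity \eqref{eq:projector-knormal}, $P^{a}{}_{b} = (-1)^s\,(k-1)!\,N^{a d_2 \dots d_k} N_{b d_2 \dots d_k}$, but tracing over $a=b$ gives $k$ on the left and $(-1)^s (k-1)!\cdot(-1)^s k! = (k-1)!\,k!$ on the right, which agree only when $(k-1)!=1$. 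The correct identity is the one you fix by tracing, $P^{a}{}_{b} = \tfrac{(-1)^s}{(k-1)!}\,N^{a d_2\dots d_k}N_{b d_2 \dots d_k}$, and it yields your prefactor $-(-1)^s/(k-1)!$ in place of the $-(-1)^s (k-1)!$ displayed in \eqref{eq:ext-curv-trace}. As a concrete check, for a circle of radius $R$ in $\mathbb{R}^{4}$ (so $k=3$, $s=0$) one computes $N_{c a_2 a_3}\nabla_b N^{b a_2 a_3} = (2/R)\,\hat{r}_c$ while $K_{cab}h^{ab} = -(1/R)\,\hat{r}_c$, confirming the $1/(k-1)!$. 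Since $(k-1)!=1$ precisely for $k \in \{1,2\}$, the discrepancy is invisible in the codimension-one and codimension-two cases actually used in the body of the paper; the same misplaced factor propagates into the $1/((k-1)!)^2$ of Corollary \ref{cor:corollary}.

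The genuine gap is your treatment of the cross term $P_{cb}\,N^{a c_2 \dots c_k}\,\nabla_a N^{b}{}_{c_2 \dots c_k}$ in arbitrary codimension: you have verified it only for $k=2$, and the mechanism you propose for general $k$ is not the right one. Frobenius integrability is neither needed nor applicable here. In Lemma \ref{lem:div-knormal-normal}, Frobenius is what kills the \emph{tangential} components of $\nabla_a N^{a\dots}$, because commutators of normal fields contracted against tangent vectors vanish; the cross term, by contrast, carries the projector $P_{cb}$ and is therefore a normal-bundle object, about which Frobenius says nothing. What actually makes it vanish --- and this is how the paper's proof proceeds --- is pure algebra: expanding in an orthonormal normal frame, the constancy of all inner products $n_{(i)}\cdot n_{(j)}$ (which is also the true source of the antisymmetry relation $n_{(i)a}\nabla_b n_{(j)}{}^{a} = -n_{(j)a}\nabla_b n_{(i)}{}^{a}$ that you cite) together with the antisymmetry of $\mathbf{N}$, organized in the paper via the projectors $Q_{(j)}$ onto the orthogonal complement of each $n_{(j)}$, forces every term to cancel with no assumption whatsoever on the deformation. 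Keeping the argument algebraic matters beyond tidiness: the paper explicitly uses the fact that \eqref{eq:ext-curv-trace} holds for arbitrary extensions of $\mathbf{N}$ in order to \emph{relax} the normal-bundle integrability assumption elsewhere (see the footnote preceding Corollary \ref{cor:corollary}), so a proof that invokes Frobenius would establish a strictly weaker theorem and break that application. You should therefore redo the general-$k$ cross-term cancellation along these algebraic lines; your codimension-two computation already exhibits the essential cancellation pattern, and the only missing ingredient is the frame bookkeeping you flagged as obstacle (a).
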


\begin{proof}
    Recall from equation \eqref{eq:extrinsic-curvature-sym} that the extrinsic curvature of $\Sigma_0$ is defined so that
    \begin{equation}
        K_{cab} X^a Y^b = P_{cb} \Del_a X^a Y^b
    \end{equation}
    holds for any tangent vector fields $X^a$ and $Y^a$. (Here, we have lowered the index $c$ to make contact with the equation in the statement of the theorem.) Since the inverse of the induced metric, $h^{ab}$, can be written as a linear combination of terms of the form $X^a Y^b,$ it follows that the trace of the extrinsic curvature takes the form
    \begin{equation} \label{eq:trace-interm}
        K_{cab} h^{ab} = P_{cb} \Del_a h^{ab}.
    \end{equation}
    The induced metric on $\Sigma_0,$ $h_{ab}$, can be written in terms of the full spacetime metric and the projector onto the normal bundle as
    \begin{equation}
        h_{ab} = g_{ab} - P_{ab}.
    \end{equation}
    Plugging this into \eqref{eq:trace-interm} and using the fact that the covariant derivative of the spacetime metric vanishes yields the expression
    \begin{equation} \label{eq:trace-pen}
        K_{cab} h^{ab} = - P_{cb} \Del_a P^{ab}.
    \end{equation}
    
    We may now exploit the fact that the projector onto the normal bundle can be written in terms of the unit $k$-normal form as
    \begin{equation} \label{eq:projector-knormal}
        P^{a}{}_{b} = (-1)^s\, (k-1)!\, N^{a d_2 \dots d_k} N_{b d_2 \dots d_k},
    \end{equation}
    where $s$ is the number of minus signs in the signature of the normal bundle.\footnote{Equation \eqref{eq:projector-knormal} is a specific consequence of the more general principle that any contraction of a volume form with itself can be written in terms of the associated metric; see Appendix B of \cite{WaldBook} for a review.} Making this substitution into the \emph{second} projector that appears in \eqref{eq:trace-pen} and expanding according to the product rule yields the expression
    \begin{equation} \label{eq:trace-fin}
        K_{cab} h^{ab} = - (-1)^s\, (k-1)!\,
                            \left[ N_{c d_2 \dots d_k} \Del_a N^{a d_2 \dots d_k}
                                    + N^{a d_2 \dots d_k} P_{c}{}^{b} \Del_a N_{b d_2 \dots d_k} \right].
    \end{equation}
    The first term in \eqref{eq:trace-fin} is exactly what was claimed in the statement of the theorem. All that remains is to show that the second term vanishes.
    
    To see that this term vanishes, we write $\mathbf{N}$ in terms of an orthonormal basis of timelike and spacelike vectors, $\{n_{(i)}{}^{a}\}$, as in equation \eqref{eq:kform-basis}. For any such basis, the projector onto the normal bundle takes the form
    \begin{equation}
        P_{c}{}^{b} = \sum_j \sigma_j\, n^{(j)}{}_{c} n_{(j)}{}^{b},
    \end{equation}
    where $\sigma_j = n_{(j)}{}^{a} n^{(j)}{}_{a}$ is the causal sign of the vector in question. The factor $P_{c}{}^{b} \Del_a N_{b d_2 \dots d_k}$, which appears in the second term of \eqref{eq:trace-fin}, can then be written in this basis as
    \begin{equation}
        P_{c}{}^{b} \Del_a N_{b d_2 \dots d_k}
            = k!\, \sum_j \sigma_j\, n^{(j)}{}_{c} n_{(j)}{}^{b}
                \Del_a\, n^{(1)}{}_{[b} n^{(2)}{}_{d_2} \dots n^{(k)}{}_{d_k]}.
    \end{equation}
    One can reorder the vectors inside the derivative to bring the term $n^{(j)}{}_{a}$ to the front by incurring a sign change $(-1)^{j+1}.$ This yields the expression
    \begin{equation}
        P_{c}{}^{b} \Del_a N_{b d_2 \dots d_k}
            = k!\, \sum_j (-1)^{j+1}\, \sigma_j\, n^{(j)}{}_{c} n_{(j)}{}^{b}
                \Del_a\, n^{(j)}{}_{[b} n^{(1)}{}_{d_2} \dots n^{(k)}{}_{d_k]}.
    \end{equation}
    We can then use the product rule to pull $n^{(j)}{}_{a}$ out of the derivative and write this expression in the form
    \begin{equation}
        P_{c}{}^{b} \Del_a N_{b d_2 \dots d_k}
            = k!\, \sum_j (-1)^{j+1}\, \sigma_j\, n^{(j)}{}_{c} n_{(j)}{}^{b}
                n^{(j)}{}_{[b} \Del_a\, n^{(1)}{}_{d_2} \dots n^{(k)}{}_{d_k]},
    \end{equation}
    where the other terms in the product rule vanish due to orthonomality of the vectors $\{n_{(i)}^{a}\}$. With some careful accounting, one can check that this term may be rewritten as
    \begin{align} \label{eq:projector-pen}
        P_{c}{}^{b} \Del_a N_{b d_2 \dots d_k}
            =\, & (k-1)!\, \sum_j (-1)^{j+1}\, \sigma_j\, n^{(j)}{}_{c}\, \times\\
                & (\delta^{e_2}{}_{[d_2} - \sigma_j n_{(j)}{}^{e_2}
                n^{(j)}{}_{[d_2}) \dots (\delta^{e_k}{}_{d_k]} - \sigma_j n_{(j)}{}^{e_k}
                n^{(j)}{}_{d_k]})
                \Del_a\, n^{(1)}{}_{e_2} \dots n^{(k)}{}_{e_k}. \nonumber 
    \end{align}
    
    Consider now the factors in this expression of the form
    \begin{equation} \label{eq:subtract-off}
        (\delta^{e_2}{}_{d_2} - \sigma_j n_{(j)}{}^{e_2}
                n^{(j)}{}_{d_2}).
    \end{equation}
    The term
    \begin{equation}
        \sigma_j n_{(j)}{}^{e_2} n^{(j)}{}_{d_2}
    \end{equation}
    is simply the projector onto the one-dimensional subspace spanned by $n_{(j)}{}^a$. By subtracting it off in \eqref{eq:subtract-off}, we are essentially constructing the projector onto \emph{the orthogonal complement of $n_{(j)}{}^{a}$.} For convenience, we label this projector
    \begin{equation}
        Q_{(j)}^{e_2}{}_{d_2} \equiv (\delta^{e_2}{}_{d_2} - \sigma_j n_{(j)}{}^{e_2}
                n^{(j)}{}_{d_2}).
    \end{equation}
    With this notation, equation \eqref{eq:projector-pen} can be simplified as
    \begin{equation} \label{eq:projector-fin}
        P_{c}{}^{b} \Del_a N_{b d_2 \dots d_k}
            =\, (k-1)!\, \sum_j (-1)^{j+1}\, \sigma_j\, n^{(j)}{}_{c}
                Q_{(j)}^{e_2}{}_{[d_2} \dots Q_{(j)}^{e_k}{}_{d_k]}
                \Del_a\, n^{(1)}{}_{e_2} \dots n^{(k)}{}_{e_k}.
    \end{equation}
    Now, let us return to the extraneous term in equation \eqref{eq:trace-fin} that we wish to show vanishes. Using \eqref{eq:projector-fin}, we may now write it in the form
    \begin{align} \label{eq:this-appendix-is-finally-done}
        N^{a d_2 \dots d_k} P_{c}{}^{b} \Del_a N_{b d_2 \dots d_k}
            = & (k-1)!\, \sum_j (-1)^{j+1}\, \sigma_j\, n^{(j)}{}_{c} \times \\
                & N^{a d_2 \dots d_k} Q_{(j)}^{e_2}{}_{d_2} \dots Q_{(j)}^{e_k}{}_{d_k}
                \Del_a\, n^{(1)}{}_{e_2} \dots n^{(k)}{}_{e_k}. \nonumber
    \end{align}
    $\mathbf{N}$ is supported in the normal bundle of $\Sigma_0,$ and each of the projectors $Q_{(j)}$ projects the corresponding index onto the orthogonal complement of $n_{(j)}^{a}$. It follows from the antisymmetry of $\mathbf{N}$ that the only nonvanishing term in $N^{a d_2 \dots d_k} Q_{(j)}^{e_2}{}_{d_2} \dots Q_{(j)}^{e_k}{}_{d_k}$ must be proportional to $n_{(j)}^{a}$, i.e., it must be of the form
    \begin{equation}
        N^{a d_2 \dots d_k} Q_{(j)}^{e_2}{}_{d_2} \dots Q_{(j)}^{e_k}{}_{d_k}
            \propto n_{(j)}^{a} n_{(1)}{}^{[e_2} \dots n_{(k)}{}^{e_k]}.
    \end{equation}
    Plugging this back into \eqref{eq:this-appendix-is-finally-done}, it follows immediately from orthonormality of the basis that the entire expression vanishes. Returning to equation \eqref{eq:trace-fin}, we see therefore that the trace of the extrinsic curvature takes the form
    \begin{equation}
        K_{cab} h^{ab} = - (-1)^s\, (k-1)!\,
                            N_{c d_2 \dots d_k} \Del_a N^{a d_2 \dots d_k},
    \end{equation}
    as desired.
\end{proof}

\vspace{0.4cm}

Theorem \ref{thm:kform-hodge} implies that the trace of the extrinsic curvature vanishes if and only if the divergence of the $k$-normal form vanishes on $\Sigma_0.$ This equivalence constitutes our fourth and final characterization of extremality: a surface is extremal if and only if the trace of its extrinsic curvature vanishes. Theorem \ref{thm:kform-hodge} also implies an immediate, useful corollary: that the divergence of the unit $k$-normal form, restricted to $\Sigma_0$, is independent of the deformation of $\Sigma_0.$\footnote{A final advantage of Theorem \ref{thm:kform-hodge} is that it allows us to relax the assumption that the normal bundle of an extremal surface is locally integrable. Even when the normal bundle is not locally integrable, and the divergence of $\mathbf{N}$ does not vanish, the normal components of the divergence vanish and so equation \eqref{eq:ext-curv-trace} vanishes. All of the main results in this paper could be obtained equally well in this more general case, albeit with slightly more unwieldy expresions, by using the vanishing of \eqref{eq:ext-curv-trace} instead of the vanishing divergence of $\mathbf{N}.$}

\begin{corollary} \label{cor:corollary}
    Let $\Sigma_{\lambda_1, \dots, \lambda_k}$ be a $k$-parameter deformation of a codimension-$k$ surface $\Sigma_0$. Then the divergence of the unit $k$-normal form on $\Sigma_0$,
    \begin{equation}
        \Del_a N^{a \dots}|_{\Sigma_0},
    \end{equation}
    is independent of the choice of deformation. In fact, it may be written as
    \begin{equation}
        \Del_a N^{a \dots}|_{\Sigma_0} = - \frac{1}{((k-1)!)^2} N^{c \dots} K_{cab} h^{ab}
        = \frac{1}{((k-1)!)^2} h^{a}{}_{b} \Del_a N^{b\dots}.
    \end{equation}
\end{corollary}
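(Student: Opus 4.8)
The plan is to read both asserted formulas directly off of Theorem \ref{thm:kform-hodge} together with Lemma \ref{lem:div-knormal-normal}, using the latter to invert the Hodge-dual relationship between the divergence of $\mathbf{N}$ and the trace of the extrinsic curvature.

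First I would establish the middle expression. Theorem \ref{thm:kform-hodge} gives $K_{cab}h^{ab} = -(-1)^s (k-1)!\, N_{ca_2\dots a_k}\Del_b N^{ba_2\dots a_k}$, which records only the single contraction of the divergence against $\mathbf{N}$. To recover the full object $\Del_b N^{b\dots}$, I would contract this identity with $N^{cd_2\dots d_k}$ and invoke Lemma \ref{lem:div-knormal-normal}, which guarantees that $\Del_b N^{b\dots}|_{\Sigma_0}$ is a $(k-1)$-form supported entirely in the normal bundle. The one-index volume-form contraction identity on the normal bundle, $N^{cd_2\dots d_k}N_{ca_2\dots a_k} = (-1)^s\, \delta^{d_2\dots d_k}_{a_2\dots a_k}$ (with the generalized Kronecker delta built from the normal projector; cf. Appendix B of \cite{WaldBook}), then acts as the identity on the already-normal divergence, and the resulting two factors of $(-1)^s$ and of $(k-1)!$ combine to give
\begin{equation}
    \Del_b N^{bd_2\dots d_k}|_{\Sigma_0} = -\frac{1}{((k-1)!)^2}\, N^{cd_2\dots d_k} K_{cab} h^{ab}.
\end{equation}
Since the right-hand side involves only the trace of the extrinsic curvature and the value of $\mathbf{N}$ on $\Sigma_0$ --- both of which are fixed by $\Sigma_0$ without reference to any extension --- this proves deformation-independence and the first claimed formula at once.

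Second I would convert this into the tangential-derivative form. I would substitute the intermediate identity \eqref{eq:trace-interm}, $K_{cab}h^{ab} = P_{cb}\Del_a h^{ab}$, and use $N^{cd_2\dots d_k}P_{cb} = N_b{}^{d_2\dots d_k}$ to write the divergence as $-\frac{1}{((k-1)!)^2} N_b{}^{d_2\dots d_k}\Del_a h^{ab}$. The key move is then to note that the tangential projector annihilates the normal form, $h^{ab} N_b{}^{d_2\dots d_k} = 0$, so the product rule gives $N_b{}^{d_2\dots d_k}\Del_a h^{ab} = -h^a{}_b \Del_a N^{bd_2\dots d_k}$; substituting yields the second formula $\Del_a N^{a\dots} = \frac{1}{((k-1)!)^2} h^a{}_b \Del_a N^{b\dots}$. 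As a sanity check, for $k=2$ the factor is unity and this reduces exactly to the binormal identity \eqref{eq:binormal-divergence-2} quoted in the main text.

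I expect the main obstacle to be bookkeeping rather than conceptual: carefully tracking the factors of $(k-1)!$ produced by the generalized-Kronecker contraction and the signs $(-1)^s$ coming from the normal-bundle signature, and making sure the projectors in the contraction identity genuinely act trivially --- which is exactly the place where Lemma \ref{lem:div-knormal-normal} does the work. The conceptual content, that a quantity naively depending on an extension of $\mathbf{N}$ is in fact fixed by $\Sigma_0$ alone, is already carried by Theorem \ref{thm:kform-hodge}; the corollary is essentially its clean restatement.
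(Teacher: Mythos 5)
Your proposal is correct and follows essentially the same route as the paper's own proof: contract the identity of Theorem \ref{thm:kform-hodge} with $N^{cd_2\dots d_k}$, apply the volume-form self-contraction identity (your generalized Kronecker delta is just the paper's $(-1)^s (k-1)!\, P^{d_2}{}_{[a_2}\cdots P^{d_k}{}_{a_k]}$ in different notation), and use Lemma \ref{lem:div-knormal-normal} so the projection acts trivially on the already-normal divergence, then obtain the tangential form by substituting \eqref{eq:trace-interm} together with the product rule against $h^{ab}N_b{}^{\dots}=0$. The only difference is that you spell out the product-rule step that the paper leaves implicit, and your signs and factorials all check out, including the $k=2$ reduction to \eqref{eq:binormal-divergence-2}.
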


\begin{proof}
    In Theorem \ref{thm:kform-hodge}, we showed that the trace of the extrinsic curvature could be written in terms of the divergence of the unit $k$-normal form as
    \begin{equation} 
        K_{cab} h^{ab} 
            = - (-1)^s\, (k-1)!\, N_{c a_2 \dots a_k} \Del_{b} N^{b a_2 \dots a_k}.
    \end{equation}
    Contracting the free index $c$ with another copy of the unit $k$-normal form yields the expression
    \begin{equation}  \label{eq:hodge-hodge}
        N^{c d_2 \dots d_k} K_{cab} h^{ab} 
            = - (-1)^s\, (k-1)!\, (N^{c d_2 \dots d_k} N_{c a_2 \dots a_k}) \Del_{b} N^{b a_2 \dots a_k}.
    \end{equation}
    The contraction of two $k$-normal forms over a single index, being the contraction of two volume tensors, satisfies the identity\footnote{Again, see Appendix B of \cite{WaldBook} for a review of this statement.}
    \begin{equation} \label{eq:volume-form-identity}
        N^{c d_2 \dots d_k} N_{c a_2 \dots a_k}
            = (-1)^s (k-1)!\, P^{d_2}{}_{[a_2} \dots P^{d_k}{}_{a_k]}.
    \end{equation}
    Plugging this into equation \eqref{eq:hodge-hodge} yields the expression
    \begin{equation}
        N^{c d_2 \dots d_k} K_{cab} h^{ab} 
            = - ((k-1)!)^2\, P^{d_2}{}_{a_2} \dots P^{d_k}{}_{a_k} \Del_{b} N^{b a_2 \dots a_k},
    \end{equation}
    where we have exploited the antisymmetry of $\mathbf{N}$ to eliminate the asymmetrization coming from equation \eqref{eq:volume-form-identity}. Since the divergence of $\mathbf{N}$ is a $(k-1)$-form on the normal bundle, as per Lemma \ref{lem:div-knormal-normal}, it is unchanged under projection onto the normal bundle, and so the above expression can be rewritten as
    \begin{equation} \label{eq:appendix-final}
        N^{c d_2 \dots d_k} K_{cab} h^{ab} 
            = - ((k-1)!)^2\, \Del_{b} N^{b d_2 \dots d_k}.
    \end{equation}
    We may solve for the spacetime divergence of $\mathbf{N}$ to yield the expression
    \begin{equation} \label{eq:appendix-final-2}
        \Del_{b} N^{b d_2 \dots d_k}
            = - \frac{1}{((k-1)!)^2} N^{c d_2 \dots d_k} K_{cab} h^{ab}.
    \end{equation}
    By substituting equation \eqref{eq:trace-interm} for the trace of the extrinsic curvature, we may equivalently write the divergence of $\mathbf{N}$ as
    \begin{equation} \label{eq:appendix-final-3}
        \Del_{b} N^{b d_2 \dots d_k}
            = \frac{1}{((k-1)!)^2} h^{a}{}_{b} \Del_a N^{b\dots}.
    \end{equation}
    
    Equation \eqref{eq:appendix-final-2} expresses the divergence of $\mathbf{N}$ in terms of $\mathbf{N}$ on $\Sigma_0$ and the trace of the extrinsic curvature on $\Sigma_0.$ Equation \eqref{eq:appendix-final-3} expresses the divergence of $\mathbf{N}$ in terms of directional derivatives of $\mathbf{N}$ along directions tangent to $\Sigma_0.$ Both of these expressions depend only on the surface $\Sigma_0$ itself, and have no dependence on the choice of deformation.
\end{proof}

\vspace{0.4cm}

With Theorem \ref{thm:kform-hodge}, we have demonstrated equivalence between our ``differential forms'' characterization of extremality, defined by vanishing divergence of the unit $k$-normal field, and the commonly used extremality condition of vanishing trace of the extrinsic curvature. Each of the four equivalent extremality conditions we have discussed in this appendix is useful in different circumstances, and can be applied as needed depending on the problem at hand. In particular, we claim that the divergence of the unit $k$-normal form is particularly good for studying the asymptotic structure of extremal surfaces, as it is a differential equation that takes a relatively simple form in asymptotic coordinates. In the main body of this paper, this philosophy is put to work to prove cutoff-covariance of holographic entanglement entropy. We hope, however, that these techniques will find use beyond proving covariance of holographic entropy cutoffs; the techniques detailed in this appendix constitute a robust toolkit for the local, covariant analysis of extremal surfaces, and may be useful anywhere extremal surfaces are studied.

\bibliographystyle{JHEP}
\bibliography{cutoffs}

\end{document}